\theoremstyle{thmstyleone}
\newtheorem{proposition}{Proposition}
\newtheorem*{corollary}{Corollary}
\theoremstyle{thmstylethree}
\newtheorem*{remark}{Remark}
\newcommand*{\beq}[1]{\begin{equation} \label{#1}}
\newcommand*{\eeq}{\end{equation}}
\renewcommand{\d}{\operatorname{d\!}{}}
\renewcommand*{\Re}{\operatorname{Re}}
\renewcommand*{\Im}{\operatorname{Im}}
\newcommand*{\defeq}{\vcentcolon=}
\DeclarePairedDelimiter{\abs}{\lvert}{\rvert}
\DeclarePairedDelimiterX{\inprod}[2]{\langle}{\rangle}{#1,#2}
\DeclarePairedDelimiterXPP{\bprob}[1]{\mathbb{P}}{[}{]}{}{#1}
\DeclarePairedDelimiterXPP{\parambprob}[1]{\mathbb{P}^\eta}{[}{]}{}{#1}
\DeclarePairedDelimiterXPP{\expect}[1]{\mathbb{E}}{[}{]}{}{#1}
\newcommand*{\given}[1][]{\nonscript\:#1\vert\nonscript\:\mathopen{}}
\DeclareSymbolFont{otherbbold}{U}{bbold}{m}{n}
\DeclareMathSymbol{\mathbbone}{\mathbb}{otherbbold}{"31}
\newcommand*{\charf}[1]{\mathbbone_{#1}}
\NewDocumentCommand{\E}{m o}{\IfNoValueTF{#2}{\mathcal{#1}}{\mathcal{#1}_{#2}}}
\NewDocumentCommand{\dE}{m o}{\IfNoValueTF{#2}{\dot{\mathcal{#1}}}{\dot{\mathcal{#1}}_{#2}}}
\newcommand*{\dom}[2][f]{%
    \ifx#1f
        \mathscr{D}_{#2}%
    \else\ifx#1w
        \mathscr{D}_{#2}^\textsc{wc}%
    \else\fi\fi%
}
\newcommand*{\itembox}[1]{%
    \item \makebox[\widthof{\(0\)}][c]{\(#1\)}%
}
\begin{document}

\title[Beyond Wilson--Cowan dynamics]{Beyond Wilson--Cowan dynamics: oscillations and chaos without inhibition}

\author*[1]{\fnm{Vincent} \sur{Painchaud}}{\email{vincent.painchaud@mail.mcgill.ca}}
\author[2,4,5]{\fnm{Nicolas} \sur{Doyon}}{\email{nicolas.doyon@mat.ulaval.ca}}
\author[3,4,5]{\fnm{Patrick} \sur{Desrosiers}}{\email{patrick.desrosiers@phy.ulaval.ca}}

\affil*[1]{\orgdiv{Department of Mathematics and Statistics}, \orgname{McGill University}, \orgaddress{\street{Sherbrooke Street West}, \city{Montreal}, \postcode{H3A 0B6}, \state{Quebec}, \country{Canada}}}
\affil[2]{\orgdiv{D\'epartment de math\'ematiques et de statistique}, \orgname{Universit\'e Laval}, \orgaddress{\street{avenue de la M\'edecine}, \city{Quebec City}, \postcode{G1V 0A6}, \state{Quebec}, \country{Canada}}}
\affil[3]{\orgdiv{D\'epartment de physique, de g\'enie physique et d'optique}, \orgname{Universit\'e Laval}, \orgaddress{\street{avenue de la M\'edecine}, \city{Quebec City}, \postcode{G1V 0A6}, \state{Quebec}, \country{Canada}}}
\affil[4]{\orgname{CERVO Brain Research Center}, \orgaddress{\street{avenue d'Estimauville}, \city{Quebec City}, \postcode{G1E 1T2}, \state{Quebec}, \country{Canada}}}
\affil[5]{\orgname{Centre interdisciplinaire en mod\'elisation math\'ematique de l'Universit\'e Laval},
\orgaddress{\street{avenue de la M\'edecine},
\city{Quebec City}, \postcode{G1V 0A6}, \state{Quebec}, \country{Canada}}}

\abstract{Fifty years ago, Wilson and Cowan developed a mathematical model to describe the activity of neural populations. In this seminal work, they divided the cells in three groups: active, sensitive and refractory, and obtained a dynamical system to describe the evolution of the average firing rates of the populations. In the present work, we investigate the impact of the often neglected refractory state and show that taking it into account can introduce new dynamics. Starting from a continuous-time Markov chain, we perform a rigorous derivation of a mean-field model that includes the refractory fractions of populations as dynamical variables. Then, we perform bifurcation analysis to explain the occurance of periodic solutions in cases where the classical Wilson--Cowan does not predict oscillations. We also show that our mean-field model is able to predict chaotic behavior in the dynamics of networks with as little as two populations.}

\keywords{Biological neural networks, Wilson--Cowan model, dynamical systems, Markov chains, chaos}

\maketitle

\section{Introduction}

Differential equations have been successfully used to model the activity of neurons for more than a century now, ever since the works of \cite{lapicque_recherches_1907}. One of the most important examples of such a model, published fifty years ago by \cite{wilson_excitatory_1972}, describes the average firing rates of coupled neural populations using a set of ordinary differential equations. This model, of significant historical importance, has been the starting point for many extensions and is still highly relevant today \citep{bressloff_stochastic_2016, chow_before_2020, cowan_wilsoncowan_2016, destexhe_wilson-cowan_2009, wilson_evolution_2021}. An important achievement of this model was to predict oscillations and bistability in the neural activity of a network made of an excitatory and an inhibitory population.

Inhibition is a key ingredient in the dynamics of the Wilson--Cowan model. Indeed, it is well known that their equations for a pair of populations can only predict oscillatory solutions if one of them is excitatory while the other is inhibitory \cite[section~11.3.2]{ermentrout_mathematical_2010}, forming a so-called Wilson--Cowan oscillator. The model may also lead to more complicated dynamical behavior in the presence of inhibition. For instance, chaotic behaviors have been shown to arise from their equations, for example by \cite{borisyuk_dynamics_1995} and by \cite{maruyama_analysis_2014}, but only---at least to our knowledge---in systems of at least two coupled Wilson--Cowan oscillators. Thus, inhibition plays a crucial role in Wilson--Cowan's model in the formation of oscillatory and chaotic behaviors, which have both been observed experimentally for a long time in the activity of neural networks and are thought to play important biological roles \citep{buzsaki_rhythms_2006, breakspear_dynamic_2017, rabinovich_role_1998}. In similar models as well, the role of inhibition is crucial for chaotic behavior to arise  \citep{fukai_asymmetric_1990, sompolinsky_chaos_1988}. While it has been demonstrated experimentally that inhibition is important to several types of oscillations in neural networks \citep{bartos_synaptic_2007, whittington_inhibition-based_2000}, it does not explain all oscillatory behavior. For example, it has been shown in experiments that excitatory neurons in the pre-Bötzinger complex can exhibit oscillatory behavior \citep{butera_models_1999-1, butera_models_1999, duan_dynamics_2017}. Another example of this are theta oscillations \citep{buzsaki_theta_2002, buzsaki_neuronal_2004}, which are thought to arise as a result of the activity of excitatory neurons only \citep{budd_theta_2005, chagnac-amitai_synchronized_1989}. Thus, there are still oscillatory behaviors that the classical model cannot explain.

One of the key steps in the construction of Wilson--Cowan's model is a time coarse graining, which has the effect of setting the refractory fraction of a neural population as always proportional to its active fraction. Wilson and Cowan originally argued that, at least when the parameters of the model have physiologically reasonable values, this should not have any important impact on the model. Later on, others even argued that the refractory period of neurons should not have an impact on the dynamics, and chose to neglect it completely \citep[section~11.3]{curtu_oscillations_2001, ermentrout_mathematical_2010}. However, it has been noticed experimentally that refractoriness can have an impact on neuronal activity. For instance, \cite{berry_refractoriness_1998} have shown that a longer refractory period in neurons improves the precision in the response of ganglion cells, suggesting that refractoriness improves neural signaling. Similarly, \cite{avissar_refractoriness_2013} have shown that refractoriness enhances precision in the timing and synchronization of neurons' spikes, which once again suggests that it makes neurons more precise. Refractoriness is also known to be related to oscillations in neural networks, since it can help neurons to synchronize their spikes \citep{sanchez-vives_cellular_2000, wiedemann_timing_2003}. In theoretical works as well, it has recently been suggested that the refractory state of neurons could be essential in some cases to provide a complete description of the dynamics of a biological neural network \citep{rule_neural_2019, weistuch_refractory_2021}.

In this paper, we propose a simple extension of Wilson--Cowan's classical model where the refractory state is explicitly considered with the same importance as the active and sensitive states. Indeed, using a different approach than Wilson--Cowan, we derive a dynamical system closely related to their original model, in the sense that it includes it as a subsystem. Then, we show that our dynamical system can predict oscillations and chaotic behaviors in the activity of neural networks of excitatory populations. This contrasts with the original Wilson--Cowan model, in which inhibition plays a crucial role for such phenomena to arise.

First, in section~\ref{sec.model}, we present an explicit construction of the model. We start by defining a continuous-time Markov chain to describe the evolution of a large network's state in a way that mimics the behavior of biological neurons. The resulting stochastic process is similar to one already proposed---but not extensively studied---by \cite{cowan_stochastic_1990}, and is also reminiscent of a similar process proposed by \cite{zarepour_universal_2019}. Then, we reduce this Markov chain, which describes the evolution of the network's state from a microscopic point of view, to a dynamical system of small dimension that describes the dynamics from a macroscopic point of view. To do so, we split the network into a small number of neural populations, and we obtain a dynamical system that describes the evolution of the average active and refractory fractions of each population. 

Then, in section~\ref{sec.wilsoncowan}, we study the relationship between our model and Wilson--Cowan's classical model. In fact, we show that Wilson--Cowan's dynamical system can be seen as a subsystem of ours. We then argue that the simplification of our model to Wilson--Cowan's is not trivial. Indeed, the domain to which corresponds the subsystem is not invariant by the flow of the full dynamical system, so that in particular, it cannot be an attracting set. 

Finally, in section~\ref{sec.examples}, we present a detailed study of three examples where our model succeeds in predicting the qualitative dynamical behavior of the underlying Markov chain, while the classical Wilson--Cowan model fails to do so. In particular, the first example shows that our model allows oscillations in the activity of a single excitatory population, and the second shows that it allows chaotic behavior in the activity of a pair of excitatory populations.

\section{The model}
\label{sec.model}

We consider a network of \(N\) neurons labelled with integers from \(1\) to \(N\). Links between neurons are described by a weight matrix \(W \in \mathbb{R}^{N\times N}\) whose element \(W_{jk}\) represents the weight of the connection from neuron \(k\) to neuron \(j\). If neuron \(k\) is excitatory, then \(W_{jk} > 0\), and if it is inhibitory, then \(W_{jk} < 0\). 

Our goal is to build an approximate macroscopic description of the network's dynamics starting from a precise microscopic description. To do so, we consider a partition \(\mathscr{P}\) of the set of neurons \(\{1, \hdots, N\}\). Each element \(J \in \mathscr{P}\) then represents a population of the network, that is, a set of neurons that share similar properties in a sense that will be made precise later. We will start by defining a continuous-time Markov process to provide a precise description of the evolution of the network's state, and then we will construct a mean-field model to describe the evolution of the population's macroscopic states.

\subsection{The microscopic model}
\label{sec.micromodel}

\subsubsection{Stochastic process}
\label{sec.markovchain}

In order to model biological neurons, we assume that neurons can take three states:
\begin{itemize}
    \itembox{0} : the \emph{sensitive} state,
    \itembox{1} : the \emph{active} state,
    \itembox{i} : the \emph{refractory} state,
\end{itemize}
where \(i\) denotes the imaginary unit. The active state is that of a neuron undergoing an action potential. Following an action potential, neurons typically enter a hyperpolarized state during which another action potential cannot occur even if the neuron receives a stimulus that would be otherwise sufficient to trigger spiking \citep{purves_neuroscience_2018}. This state is what we call the refractory state. When a neuron is neither active nor refractory, we say that it is sensitive, as if it receives a high enough input it can spike in response.

We see the transitions between these states as random: a sensitive neuron activates at a rate that increases nonlinearly with its input, and then gets to the refractory state and then back to the sensitive state at constant rates. This intuitive process describes the evolution of the whole network's state, and is defined rigorously as a continuous-time Markov chain \(\{X_t\}_{t\geq 0}\) taking values in the state space \(E \defeq \{0,1,i\}^N\). 

\begin{figure}
\centering
\begin{tikzpicture}
    \node (0) at (-30:1) {\(0\)};
    \node (1) at ( 90:1) {\(1\)};
    \node (i) at (210:1) {\(i\)};
    \draw[-stealth] (0) to[bend right=42] node [midway,above] {\(\alpha\)} (1);
    \draw[-stealth] (1) to[bend right=42] node [midway,above] {\(\beta\)} (i);
    \draw[-stealth] (i) to[bend right=42] node [midway,above] {\(\gamma\)} (0);
\end{tikzpicture}
\caption{Allowed transitions between states of neurons and corresponding characteristic rates.}
\label{fig.rates}
\end{figure}
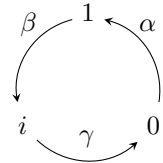

For each neuron \(j\), let \(\alpha_j, \beta_j, \gamma_j > 0\). These parameters characterize the transition rates from a state to another as illustrated in Fig.~\ref{fig.rates}. While \(\beta_j\) and \(\gamma_j\) both describe the actual transition rates, \(\alpha_j\) rather represents the activation rate of \(j\) if it is given an infinite excitation. Indeed, we assume a soft threshold dynamics: the activation rate is given by the function
\[
a_j(x) \defeq \alpha_j F_J\biggl( \sum_{k=1}^N W_{jk} \Re x_k + Q_J \biggr),
\]
where \(F_J\colon \mathbb{R} \to [0,1]\) is a function of the neuron's input, \(Q_J \in \mathbb{R}\) is an input in \(J\) that is external to the network, and \(J\) is the population to which belongs \(j\). Here, we assume that \(F_J\) is a continuous and increasing function that tends to \(1\) at infinity. Biologically speaking, \(\beta_j\) and \(\gamma_j\) can be interpreted as the inverse of the average times neuron \(j\) spends in the active and refractory states, respectively. Thus, for example, a higher value for \(\beta_j\) than for \(\gamma_j\) translates the idea that the duration of an action potential is less than the refractory period. However, the interpretation of \(\alpha_j\) is less direct. Indeed, \(\alpha_j\) is the activation rate of neuron \(j\) only when it is given an infinite input. Hence, the inverse of \(\alpha_j\) cannot be seen directly as the average time neuron \(j\) spends in the sensitive state; this time depends on the activity of the whole network. Moreover, the average and maximal firing rates vary greatly from situation to situation or according to the neural type~\citep{roxin_distribution_2011, wang_firing_2016}. This implies that the model can be relevant for a wide range of values of \(\alpha_j\). 

From the rates associated to the state transitions of single neurons, we can now define a generator for the Markov chain \(\{X_t\}_{t\geq 0}\), which will allow to define the process correctly. This generator is a matrix \(M\) indexed over the state space \(E\) whose entry \(m(x,y)\) gives the transition rate from a state \(x\) to another state \(y\). We define this rate as
\[
m(x,y) \defeq \sum_{j=1}^N m_j(x,y) \smash{\prod_{\substack{k=1 \\ k \neq j}}^N} \delta_{x_ky_k},
\]
where \(\delta_{x_ky_k}\) is a Kronecker delta and where
\[
\begin{aligned}
m_j(x,y) \defeq a_j(x) & (1 - \abs{x_j}) \bigl( \Re y_j - (1 - \abs{y_j}) \bigr) \\
    {} + \beta_j & \Re(x_j) (\Im y_j - \Re y_j) \\
    {} + \gamma_j & \Im(x_j) \bigl( (1 - \abs{y_j}) - \Im y_j \bigr).
\end{aligned}
\]
To make sense of \(m_j(x,y)\), recall that a component \(x_j\) of a state vector \(x \in E\) is either \(0\), \(1\) or \(i\), so that exactly one of \(\Re x_j\), \(\Im x_j\) and \(1 - \abs{x_j}\) is \(1\) while the others are \(0\).

Now, it is a simple calculation to see that the matrix \(M \defeq \{m(x,y) : x,y \in E\}\) is the generator of a continuous-time Markov chain. Thus, it follows from the Kolmogorov extension theorem (for details, see e.g. the books by \cite{doob_stochastic_1990} or \cite{norris_markov_1997}) that a probability measure \(\mathbb{P}\) exists on the space \((E,2^E)^{[0,\infty)}\) such that for any \(x,y \in E\), as \(\Delta t \downarrow 0\),
\[
\bprob{X_{t+\Delta t} = y \given X_t = x} = \delta_{xy} + m(x,y) \Delta t + o(\Delta t).
\]
In particular, if \(X_t^j\) denotes the \(j\)th component of \(X_t\) and if \(x\) denotes a state with \(x_j = 0\), then
\begin{subequations}
\label{eq.transitionprob}
\begin{align}
\bprob{X_{t+\Delta t}^j = 1 \given X_t = x} & = a_j(x)\Delta t + o(\Delta t),
\intertext{while in general}
\bprob{X_{t+\Delta t}^j = i \given X_t^j = 1} & = \beta_j\Delta t + o(\Delta t), \\
\bprob{X_{t+\Delta t}^j = 0 \given X_t^j = i} & = \gamma_j\Delta t + o(\Delta t),
\end{align}
\end{subequations}
and the transitions \(0 \mapsto i\), \(i \mapsto 1\) and \(1 \mapsto 0\) all have \(o(\Delta t)\) rates.

In principle, the system is then completely described, since the transition probabilities \(\bprob{X_{t+\Delta t} = y \given X_t = x}\) for \(x,y \in E\) can be obtained from the solution of the Kolmogorov forward equation
\beq{eq.kolmogorov}
\dot{P}(t) = P(t)M,
\eeq
where the dot denotes a derivative. Indeed, the solution \(P(t)\) of~\eqref{eq.kolmogorov} is the matrix whose entries are the probabilities that the system makes a transition from a state to another during an interval of time \(t\). More explicitely, the transition probability \(\bprob{X_{t+\Delta t} = y \given X_t = x}\) is equal to the element \(x,y\) of \(P(\Delta t)\). However, since there are \(3^N\) possible states in \(E\), this differential equation is enormous when the network has a large number of neurons, so that it cannot be studied directly in practice.

\subsubsection{Dynamical system}
\label{sec.dynsys}

Now, we want to use the stochastic process constructed above to obtain a macroscopic description of the evolution of the network's state, in the form of a dynamical system. To do this, we first introduce functions \(p_j, q_j, r_j \colon [0,\infty) \to [0,1]\) given by
\begin{subequations}
\begin{align}
\SwapAboveDisplaySkip
    p_j(t) & \defeq \bprob{X_t^j = 1}, \\
    q_j(t) & \defeq \bprob{X_t^j = 0}, \\
    r_j(t) & \defeq \bprob{X_t^j = i}.
\end{align}
\end{subequations}
Since \(X_t^j\) takes values in \(\{0,1,i\}\), it is easy to see that \(p_j + q_j + r_j \equiv 1\) and that
\begin{align*}
    p_j(t) & = \expect{\Re X_t^j}, \\
    q_j(t) & = \expect{1 - \abs{X_t^j}}, \\
    r_j(t) & = \expect{\Im X_t^j},
\end{align*}
where \(\mathbb{E}\) denotes the expectation with respect to \(\mathbb{P}\). Using these relations, it is possible to find expressions for the derivatives of these variables. Indeed, with \(\Delta t > 0\),
\begin{align*}
p_j(t + \Delta t) 
    & = \bprob{X_{t+\Delta t}^j = 1} \\
    & = \sum_{x\in E} \bprob{X_{t+\Delta t}^j = 1 \given X_t = x} \bprob{X_t = x}.
\intertext{Using the transition rates introduced earlier, we get from~\eqref{eq.transitionprob} that}
\begin{split}
p_j(t + \Delta t)
    & = \sum_{x\in E} \Bigl( 
        \Re(x_j)(1 - \beta_j \Delta t) \\[-2mm] 
        & \hspace*{12mm} + \bigl( 1 - \abs{x_j} \bigr) a_j(x) \Delta t \\
        & \hspace*{22mm} + o(\Delta t) 
    \Bigr) \bprob{X_t = x}
\end{split} \\
\begin{split}
    & = (1 - \beta_j \Delta t) p_j(t) \\ 
    & \hspace*{6mm} + \Delta t \expect[\big]{\bigl( 1 - \abs{X_t^j} \bigr) a_j(X_t)} + o(\Delta t).
\end{split}
\end{align*}
Taking \(\Delta t \to 0\), it follows that
\begin{subequations}
\label{eq.microsystem}
\begin{align}
\dot{p}_j(t) & = - \beta_j p_j(t) + \expect[\big]{\bigl( 1 - \abs{X_t^j} \bigr) a_j(X_t)}.
\intertext{Using the same method, we also find that}
\dot{q}_j(t) & = - \expect[\big]{\bigl( 1 - \abs{X_t^j} \bigr) a_j(X_t)} + \gamma_j r_j(t), \\
\dot{r}_j(t) & = - \gamma_j r_j(t) + \beta_j p_j(t).
\end{align}
\end{subequations}
At first glance, one might think that at this point, the dimension of the system has been reduced from \(3^N\) to \(2N\). However, the system~\eqref{eq.microsystem} is not closed, in the sense that the derivatives of \(p_j\), \(q_j\) and \(r_j\) are not given as functions of the same variables, but rather involve other expectations. Hence, the Kolmogorov forward equation~\eqref{eq.kolmogorov} is still needed to solve~\eqref{eq.microsystem}.

\subsection{The macroscopic model}
\label{sec.macromodel}

To study the network's dynamics from a macroscopic point of view, we introduce for each population \(J \in \mathscr{P}\) and each \(t \geq 0\) the random variables%
\begin{subequations}
\begin{align}
A_t^J & \defeq \frac{1}{\abs{J}} \sum_{j\in J} \Re X_t^j, \\
R_t^J & \defeq \frac{1}{\abs{J}} \sum_{j\in J} \Im X_t^j, \\
S_t^J & \defeq \frac{1}{\abs{J}} \sum_{j\in J} \bigl( 1 - \abs{X_t^j} \bigr),
\end{align}
\end{subequations}
which can be understood as state variables for populations. Thus, the expected values of these variables describe the expected behavior of the network from a macroscopic point of view. We will use the system~\eqref{eq.microsystem} to find a dynamical system that describes the expected behavior of the network in that sense.

Since it drastically simplifies the reduction of~\eqref{eq.microsystem} to the macroscopic point of view, we assume that all parameters (the weights and transition rates) are constant over populations. However, we stress that this is not necessary if populations are large: we could instead assume that parameters are independent random variables, identically distributed over populations, and the resulting macroscopic model would be the same. This more general approach, which is also much more technical, is discussed in appendix~\ref{sec.apx.construction}.

Using the assumption described above, we see that the input in a neuron \(j\) of a population \(J\) becomes
\[
\begin{aligned}
\sum_{k=1}^N W_{jk} & \Re X_t^k + Q_J \\
    & = \sum_{K\in\mathscr{P}} \sum_{k\in K} W_{JK} \Re X_t^k + Q_J \\
    & = \sum_{K\in\mathscr{P}} \abs{K} W_{JK} A_t^K + Q_J,
\end{aligned}
\]
where we replaced \(W_{jk}\) with \(W_{JK}\) for \(k \in K\), as we assume that weights are constant over populations. To simplify notation, we will write such an input as
\[
B_t^J \defeq \sum_{K\in\mathscr{P}} c_{JK} A_t^K + Q_J
\quad\text{with}\quad
c_{JK} \defeq \abs{K} W_{JK}.
\]
It is then easy to obtain expressions for the derivatives of the average macroscopic state variables%
\begin{subequations}
\begin{align}
    \E{A}[J](t) & \defeq \expect{A_t^J}, \\
    \E{R}[J](t) & \defeq \expect{R_t^J}, \\
    \E{S}[J](t) & \defeq \expect{S_t^J}.
\end{align}
\end{subequations}
Indeed, if \(j \in J\), then \(a_j(X_t) = \alpha_J F_J(B_t^J)\), so it directly follows from~\eqref{eq.microsystem} that
\begin{subequations}
\label{eq.macrosystem}
\begin{align}
    \dE{A}[J](t) & = - \beta_J \E{A}[J](t) + \alpha_J \expect{F_J(B_t^J) S_t^J}, \\
    \dE{S}[J](t) & = - \alpha_J \expect{F_J(B_t^J) S_t^J} + \gamma_J \E{R}[J](t), \\
    \dE{R}[J](t) & = - \gamma_J \E{R}[J](t) + \beta_J \E{A}[J](t),
\end{align}
\end{subequations}
where we replaced \(\alpha_j\) with \(\alpha_J\) and followed the same pattern for other transition rates.

Finally, to close the above dynamical system, we add the mean-field assumption and we neglect covariances between state variables. Thus, we obtain the mean-field dynamical system
\begin{subequations}
\label{eq.meanfield}
\begin{align}
    \dE{A}[J] & = - \beta_J \E{A}[J] + \alpha_J F_J(\E{B}[J]) \E{S}[J], \\
    \dE{S}[J] & = - \alpha_J F_J(\E{B}[J]) \E{S}[J] + \gamma_J \E{R}[J], \\
    \dE{R}[J] & = - \gamma_J \E{R}[J] + \beta_J \E{A}[J]
\end{align}
\end{subequations}
where \(\E{B}[J](t) \defeq \expect{B_t^J}\).

Remark that for each \(J\), one of the equations~\eqref{eq.meanfield} is redundant since \(\E{A}[J] + \E{R}[J] + \E{S}[J] \equiv 1\). Hence, the above dynamical system has dimension \(2n\), where \(n\) is the number of populations of the network. Moreover, the variables \(\E{A}[J]\), \(\E{R}[J]\) and \(\E{S}[J]\) must be contained in \([0,1]\) to make sense. Therefore, the system~\eqref{eq.meanfield} can be studied using only the active and refractory fractions of populations, on the domain
\beq{}
\begin{multlined}
\dom{n} \defeq \bigl\{ (\E{A}[J],\E{R}[J])_{J\in\mathscr{P}} \in [0,1]^{2n} : \\
    \qquad\qquad \forall J \in \mathscr{P}, 0 \leq \E{A}[J] + \E{R}[J] \leq 1 \bigr\}.
\end{multlined}
\eeq

This domain enjoys a simple invariance property.

\begin{proposition}
\label{prop.invariance}
The domain \(\dom{n}\) is invariant by the flow of the dynamical system~\eqref{eq.meanfield}.
\end{proposition}

\begin{proof}
Recall that all transition rates \(\alpha_J, \beta_J, \gamma_J\) are positive and that all functions \(F_J\) are nonnegative. If \(Y = (\E{A}[J],\E{R}[J])_{J\in\mathscr{P}}\) is a point at the boundary of \(\dom{n}\), then one of the fractions \(\E{A}[J]\), \(\E{R}[J]\) or \(\E{S}[J]\) is zero at \(Y\) for some population \(J\); call this fraction \(\E{X}[J]\). Then it is clear from the equations~\eqref{eq.meanfield} that the derivative of \(\E{X}[J]\) must be nonnegative at \(Y\), so that the vector field corresponding to the dynamical system is directed inwards \(\dom{n}\) at \(Y\).
\end{proof}

This invariance property is crucial to the meaning of solutions of the mean-field dynamical system~\eqref{eq.meanfield}. Indeed, for the variables \(\E{A}[J]\), \(\E{R}[J]\) and \(\E{S}[J]\) to represent proportions of neurons, all of them must remain in the interval \([0,1]\) at all times. Thus, Proposition~\ref{prop.invariance} confirms that it is always possible to interpret the components of a solution of~\eqref{eq.meanfield} as proportions of neurons in each of the three states, as long as the initial state can also be interpreted in this way. In other words, any solution of~\eqref{eq.meanfield} that starts from a physiologically meaningful initial state continues to carry a physiological interpretation at all times.

\section{Relationship with the Wilson--Cowan model}
\label{sec.wilsoncowan}

The dynamical system~\eqref{eq.meanfield} can be seen as a generalization of the classical system introduced by \cite{wilson_excitatory_1972}. Indeed, both our model and Wilson--Cowan's describe the activities of populations of a biological neural network. Wilson--Cowan's classical equations, which are formulated for a network split into an excitatory and an inhibitory population, are given by
\begin{subequations}
\label{eq.originalwilsoncowan}
\begin{align}
    \tau_e \dot{E} & = - E + (1 {-} r_e E) f_e(w_{ee} E {-} w_{ei} I {+} Q_e),\!\! \\
    \tau_i \dot{I} & = - I + (1 {-} r_i I) f_i(w_{ie} E {-} w_{ii} I {+} Q_i),
\end{align}
\end{subequations}
where \(E\) and \(I\) are the average firing rates of the populations, \(\tau_e\) and \(\tau_i\) are time constants, \(r_e\) and \(r_i\) are the refractory periods of neurons, \(f_e\) and \(f_i\) are functions that describe the response of both populations, \(Q_e\) and \(Q_i\) are external inputs, and \(w_{JK}\) are nonnegative coefficients that describe the links between the populations.

To obtain these equations, Wilson and Cowan use a time coarse graining that leads to see the refractory fraction of a population as proportional to its active fraction. Remark that in our model, such a reduction amounts to fixing each refractory fraction \(\E{R}[J]\) to its equilibrium solution in the system~\eqref{eq.meanfield}:
\[
\dE{R}[J] = 0 \iff \E{R}[J] = \frac{\beta_J}{\gamma_J} \E{A}[J].
\]
Using this simplification, the system~\eqref{eq.meanfield} simply becomes
\beq{eq.wilsoncowan}
\begin{aligned}
\dE{A}[J] & = - \beta_J \E{A}[J] \\ 
    & \qquad + \alpha_J \biggl( 1 - \Bigl( 1 + \frac{\beta_J}{\gamma_J} \Bigr) \E{A}[J] \biggr) F_J(\E{B}[J]),
\end{aligned}
\eeq
and is completely equivalent to Wilson--Cowan's in the case of a pair of populations. Indeed, we can introduce \(x_J \defeq \beta_J \E{A}[J]\), which is the average proportion of active neurons per unit time---that is, the firing rate---of population \(J\). Then, the above equation leads to
\[
\frac{1}{\beta_J} \dot{x}_J = - x_J + \biggl( 1 - \Bigl( \frac{1}{\beta_J} + \frac{1}{\gamma_J} \Bigr) x_J \biggr) \alpha_J F_J(\E{B}[J]),
\]
where we can write \(\E{B}[J] = \sum_{K\in\mathscr{P}} \frac{c_{JK}}{\beta_K} x_K + Q_J\). The parallel with Wilson--Cowan's original model~\eqref{eq.originalwilsoncowan} is then quite clear if we set \(\tau_J = \nicefrac{1}{\beta_J}\), \(r_J = \nicefrac{1}{\beta_J} + \nicefrac{1}{\gamma_J}\), \(f_J = \alpha_J F_J\) and \(w_{JK} = \nicefrac{c_{JK}}{\beta_K}\).

The above arguments show that Wilson--Cowan's dynamical system can be seen as the subsystem~\eqref{eq.wilsoncowan} of the mean-field system~\eqref{eq.meanfield}, which corresponds to the domain
\[
\dom[w]{n} \defeq \Bigl\{ (\E{A}[J], \E{R}[J])_{J\in\mathscr{P}} \in \dom{n} : \forall J, \E{R}[J] = \frac{\beta_J}{\gamma_J} \E{A}[J] \Bigr\}.
\]
It is then justified to ask whether the reduction of the system~\eqref{eq.meanfield} to the subsystem~\eqref{eq.wilsoncowan} leads to a loss of richness in the dynamics. 

A first hint that dynamical behaviors can be lost by the reduction of~\eqref{eq.meanfield} to the subsystem~\eqref{eq.wilsoncowan} is given in the following proposition.

\begin{proposition}
\label{prop.noninvariance}
The domain \(\dom[w]{n}\) is not invariant by the flow of the dynamical system~\eqref{eq.meanfield}.
\end{proposition}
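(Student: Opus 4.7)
The plan is to exhibit a point of \(\dom[w]{n}\) at which the vector field of~\eqref{eq.meanfield} fails to be tangent to the constraint surface, so that the trajectory starting there immediately leaves \(\dom[w]{n}\).

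Concretely, for each population \(J\) I would introduce the ``defect'' function
\[
\phi_J \defeq \E{R}[J] - \frac{\beta_J}{\gamma_J}\E{A}[J],
\]
so that \(\dom[w]{n}\) is cut out of \(\dom{n}\) by the equations \(\phi_J = 0\) for all \(J \in \mathscr{P}\). Invariance of \(\dom[w]{n}\) would force \(\dot\phi_J\) to vanish at every point of \(\dom[w]{n}\). On \(\dom[w]{n}\), however, the third equation of~\eqref{eq.meanfield} gives \(\dE{R}[J] = -\gamma_J \E{R}[J] + \beta_J \E{A}[J] = 0\), so by the first equation of~\eqref{eq.meanfield},
\[
\dot\phi_J \big|_{\dom[w]{n}}
  = -\frac{\beta_J}{\gamma_J}\dE{A}[J]
  = \frac{\beta_J}{\gamma_J}\bigl(\beta_J \E{A}[J] - \alpha_J F_J(\E{B}[J])\E{S}[J]\bigr),
\]
which manifestly need not vanish.

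To exhibit a witness, take the point \(\E{A}[J] = \E{R}[J] = 0\) for all \(J\), which lies in \(\dom[w]{n}\) and has \(\E{S}[J] = 1\). There the previous expression reduces to \(\dot\phi_J = -\alpha_J \beta_J F_J(Q_J)/\gamma_J\). Since \(F_J\) is increasing with \(F_J \to 1\) at infinity, at least after possibly enlarging the external input \(Q_J\) (and in the generic case of a strictly positive sigmoid, immediately), we have \(F_J(Q_J) > 0\) and hence \(\dot\phi_J < 0\). Thus for small \(t > 0\), \(\E{R}[J](t) < (\beta_J/\gamma_J)\E{A}[J](t)\), so the trajectory leaves \(\dom[w]{n}\) while remaining in \(\dom{n}\) by the previous proposition.

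There is no real obstacle here; the calculation is essentially one line once the constraint is written as \(\phi_J = 0\). The only point deserving care is the choice of the witness point: one has to pick a configuration at which the activation term \(\alpha_J F_J(\E{B}[J])\E{S}[J]\) and the decay term \(\beta_J \E{A}[J]\) are not in balance, and verify that this can be arranged under the paper's mild hypotheses on \(F_J\).
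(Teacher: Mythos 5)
Your reduction is in substance the same as the paper's: the paper also argues via non-tangency of the vector field to \(\dom[w]{n}\), pairing \(f(Y)\) with an explicit normal vector \(Y^\perp\), and since \(\dE{R}[K]\rvert_Y = 0\) on \(\dom[w]{n}\) the normal component collapses to \(\dE{A}[K]\rvert_Y\) --- which is exactly your \(\dot\phi_K\) up to the factor \(-\nicefrac{\beta_K}{\gamma_K}\) (indeed \(\nabla\phi_K = -\tfrac{\beta_K}{\gamma_K}Y^\perp\)). The difference is the choice of witness point, and that is where your proof has a conditional gap. At the origin your defect derivative is \(-\alpha_J\beta_J F_J(Q_J)/\gamma_J\), so you need \(F_J(Q_J) > 0\); and ``enlarging \(Q_J\)'' is not available to you, because \(Q_J\) is a parameter of the dynamical system the proposition is about --- changing it establishes non-invariance for a \emph{different} system. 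There are two honest repairs. If you read the standing hypothesis ``\(F_J\) is continuous and increasing'' as strict monotonicity, positivity is automatic: \(F_J\) maps \(\mathbb{R}\) into \([0,1]\), and \(F_J(y_0) = 0\) would force \(F_J(y) < 0\) for \(y < y_0\); hence \(F_J > 0\) everywhere and your origin witness works unconditionally --- a one-line observation you missed, opting instead for the illegitimate fix. If ``increasing'' is meant weakly, so that \(F_J\) may vanish on a half-line, your witness genuinely fails, and the paper's choice is the repair: take \(Y \in \dom[w]{n}\) with \(\E{A}[K] = x\) satisfying
\[
\frac{\alpha_K\gamma_K}{\alpha_K\beta_K + \alpha_K\gamma_K + \beta_K\gamma_K} < x < \frac{\gamma_K}{\beta_K + \gamma_K},
\]
whence \(\dE{A}[K]\rvert_Y \leq -\beta_K x + \alpha_K\bigl(1 - \tfrac{\beta_K+\gamma_K}{\gamma_K}x\bigr) < 0\) using only \(F_K \leq 1\): the decay term dominates no matter what \(F_K\) does at the given input. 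In short, same mechanism, but the paper's witness trades your required \emph{lower} bound on the activation term for an \emph{upper} bound, which is what the hypotheses actually supply; that is what makes its proof unconditional where yours, as written, is not.
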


\begin{proof}
Let \(f\colon\dom{n} \to \mathbb{R}^{2n}\) denote the vector field corresponding to the differential equation~\eqref{eq.meanfield}. Notice that \(\dom[w]{n}\) is a subset of an euclidean subspace of dimension \(n\) in \(\mathbb{R}^{2n}\). Thus, for \(\dom[w]{n}\) to be invariant by the flow of~\eqref{eq.meanfield}, \(f\) must be tangent to \(\dom[w]{n}\) at each of its points. However, this cannot be the case, because on \(\dom[w]{n}\), \(f\) has no component along the axes associated with the refractory fractions, whereas \(\dom[w]{n}\) is not orthogonal to these axes.

To be more explicit, fix a population \(K\in\mathscr{P}\), and choose \(x \in (0,1)\) such that
\[
\frac{\alpha_K\gamma_K}{\alpha_K\beta_K + \alpha_K\gamma_K + \beta_K\gamma_K} < x < \frac{\gamma_K}{\beta_K + \gamma_K},
\]
which is always possible since \(\alpha_K, \beta_K, \gamma_K > 0\). Now, define the vectors \(Y = (\E{A}[J], \E{R}[J])_{J\in\mathscr{P}}\) and \(Y^\perp = (\E{A}[J]^\perp, \E{R}[J]^\perp)_{J\in\mathscr{P}}\) by setting
\[
\begin{aligned}
    \E{A}[K] & = x, &
    \E{R}[K] & = \frac{\beta_K}{\gamma_K} x, \\
    \E{A}[K]^\perp & = 1, &
    \E{R}[K]^\perp & = - \frac{\gamma_K}{\beta_K},
\end{aligned}
\]
and then \(\E{A}[J] = \E{R}[J] = \E{A}[J]^\perp = \E{R}[J]^\perp = 0\) for all \(J \neq K\). Then \(Y \in \dom[w]{n}\) and \(Y^\perp\) is orthogonal to \(\dom[w]{n}\). An example of a phase plane with \(Y\) and \(Y^\perp\) is illustrated in Fig.~\ref{fig.phaseplane}, in a simple case where the network has a single population.

\begin{figure}
\includegraphics[width=\linewidth]{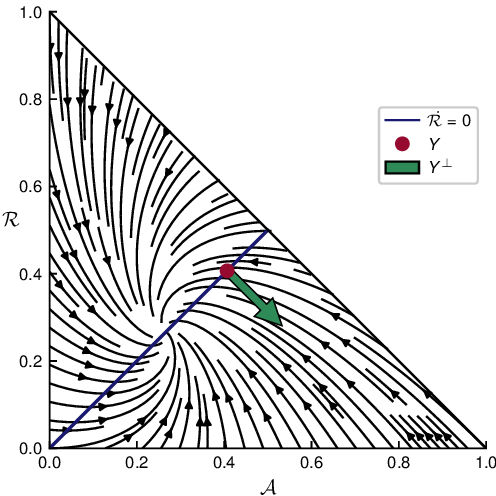}
\caption{Example of phase plane of the mean-field system~\eqref{eq.meanfield} in a case of a single population, to illustrate the proof of Proposition~\ref{prop.noninvariance}. Note that the nullcline \(\dE{R} = 0\) is the set \(\dom[w]{1}\)}
\label{fig.phaseplane}
\end{figure}

Suppose \(f\) is tangent to \(\dom[w]{n}\) at \(Y\). Then \(f(Y)\) must be orthogonal to \(Y^\perp\). Now, all components of \(Y^\perp\) are zero except along the axes of population \(K\), and \(\dE{R}[K]\bigr\rvert_Y = 0\) because \(Y \in \dom[w]{n}\). Thus, \(\inprod{f(Y)}{Y^\perp} = \dE{A}[K]\bigr\rvert_Y\). Since \(F_K\) is nonnegative, we can easily compute the estimate
\begin{align*}
\dE{A}[K]\Bigr\rvert_Y
    & = - \beta_K x + \alpha_K F_K(\E{B}[K]) \E{S}[K] \\
    & \leq - \beta_K x + \alpha_K \Bigl( 1 - \frac{\beta_K + \gamma_K}{\gamma_K} x \Bigr) \\
    & = \alpha_K \Bigl( 1 - \frac{\alpha_K\beta_K + \alpha_K\gamma_K + \beta_K\gamma_K}{\alpha_K\gamma_K} x \Bigr).
\end{align*}
By our choice of \(x\), it follows that \(\inprod{f(Y)}{Y^\perp} < 0\), which contradicts the requirement for \(f(Y)\) to be orthogonal to \(Y^\perp\). Hence, \(f\) is not tangent to \(\dom[w]{n}\) at \(Y\), so \(\dom[w]{n}\) is not invariant by the flow of the dynamical system~\eqref{eq.meanfield}.
\end{proof}

Since an attracting set is invariant by definition, Proposition~\ref{prop.noninvariance} directly implies the following corollary.

\begin{corollary}
The domain \(\dom[w]{n}\) is not an attracting set of the dynamical system~\eqref{eq.meanfield}, so its subsystem~\eqref{eq.wilsoncowan} is not stable.
\end{corollary}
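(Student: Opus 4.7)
The plan is to derive the corollary essentially as an immediate consequence of Proposition~\ref{prop.noninvariance}. The first assertion — that $\dom[w]{n}$ is not an attracting set of the dynamical system~\eqref{eq.meanfield} — follows by contraposition from the standard definition of an attracting set: any such set must be positively invariant under the flow. Since Proposition~\ref{prop.noninvariance} already rules out invariance of $\dom[w]{n}$, the definitional requirement fails and no set-theoretic notion of attractor applies.

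For the second assertion — that the Wilson--Cowan subsystem~\eqref{eq.wilsoncowan} is not stable — I would unwind what it means for a reduced subsystem embedded inside a larger ambient dynamical system to be stable. The intended reading is that trajectories of~\eqref{eq.meanfield} launched from $\dom[w]{n}$ should remain (at least asymptotically) close to $\dom[w]{n}$, so that restricting attention to the Wilson--Cowan equations on $\dom[w]{n}$ gives a faithful long-time description. But the constructive witness point $Y \in \dom[w]{n}$ produced in the proof of Proposition~\ref{prop.noninvariance} satisfies $\inprod{f(Y)}{Y^\perp} < 0$, with $Y^\perp$ orthogonal to the submanifold $\dom[w]{n}$. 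Hence the full-system trajectory through $Y$ leaves $\dom[w]{n}$ transversally at $t = 0$, contradicting any such stability notion.

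I expect the only real obstacle to be terminological: pinning down precise definitions of "attracting set" and of "stability of a subsystem" consistent with the author's usage. Once those definitions are fixed in the invariance-based sense standard in dynamical systems theory, no new computation is needed — the geometric input has already been supplied by Proposition~\ref{prop.noninvariance}, and the corollary is purely a logical consequence of it.
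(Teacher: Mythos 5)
Your proposal is correct and follows exactly the paper's own route: the paper proves the corollary in a single sentence by noting that an attracting set is invariant by definition, so Proposition~\ref{prop.noninvariance} directly rules out \(\dom[w]{n}\) being attracting, and the instability of the subsystem~\eqref{eq.wilsoncowan} is read off from the same fact. Your additional unwinding of the stability claim via the witness point \(Y\) with \(\inprod{f(Y)}{Y^\perp} < 0\) is a faithful elaboration of what the paper leaves implicit, not a different argument.
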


Proposition~\ref{prop.noninvariance} and its corollary give information about the relationship between the solutions of the mean-field system~\eqref{eq.meanfield} and its Wilson--Cowan subsystem~\eqref{eq.wilsoncowan}. If any solution of the mean-field system did eventually converge to a solution of the Wilson--Cowan subsystem---which would happen if \(\dom[w]{n}\) was an attracting set from \(\dom{n}\)---then we would expect that both models actually lead to the same predictions. However, the corollary to Proposition~\ref{prop.noninvariance} shows that this is not true, so that we can expect to see different long-term behaviors predicted by the two models, at least in some cases. In fact, Proposition~\ref{prop.noninvariance} shows that in general, a solution of the Wilson--Cowan subsystem is not even a solution of the mean-field system. This means that if a solution of the mean-field system meets the condition that \(\E{R}[J](t) = \frac{\beta_J}{\gamma_J} \E{A}[J](t)\) for all populations \(J\) at some time \(t\), this condition does not need to remain true afterwards, and the behavior predicted from that point by the Wilson--Cowan subsystem might not be the same as the behavior predicted by the mean-field system.

Nevertheless, the systems~\eqref{eq.meanfield} and~\eqref{eq.wilsoncowan} share some properties: for instance, it is easy to see that their fixed points are exactly the same, since the only way for \(\dE{R}[J] = 0\) is that \(\E{R}[J] = \frac{\beta_J}{\gamma_J} \E{A}[J]\). However, it is not clear at all if the stability of these fixed point remains the same. In fact, we will see in the next section that it is not always the case.

An important difference between these two models is the disparity in the dimension of the dynamical system for the same number of populations. Indeed, in order to model the dynamics of a network of \(n\) neural populations, our mean-field system uses a system of \(2n\) equation whereas the Wilson--Cowan system only uses \(n\). In particular, this means that our mean-field system has enough dimensions to allow oscillations in the activity of a single neural population, unlike the Wilson--Cowan system. Indeed, we will see in section~\ref{sec.ex.cycle1} an example of an excitatory population whose activity oscillates due to the refractory period. However, it is still not possible with our model to predict oscillations in the activity of a single inhibitory population.

\begin{proposition}
\label{prop.nocyclesforinhibitors}
In the case of a single population, if \(\beta + \gamma > \alpha c \sup F'\), then the mean-field system~\eqref{eq.meanfield} has no cycles in the domain \(\dom{1}\).
\end{proposition}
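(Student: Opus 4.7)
My plan is to eliminate one variable using conservation of probability, reducing to a planar system, and then invoke Bendixson's criterion on the resulting triangular domain. For a single population, the relation $\E{A} + \E{R} + \E{S} \equiv 1$ gives $\E{S} = 1 - \E{A} - \E{R}$, and~\eqref{eq.meanfield} becomes the planar system
\begin{align*}
\dE{A} &= -\beta \E{A} + \alpha F(c\E{A} + Q)(1 - \E{A} - \E{R}), \\
\dE{R} &= -\gamma \E{R} + \beta \E{A},
\end{align*}
on $\dom{1} \subset \mathbb{R}^2$, which is a triangle, hence convex and simply connected.

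The crucial step is to compute and bound the divergence of this vector field. A direct calculation yields
\[
\operatorname{div} = -\beta - \gamma - \alpha F(c\E{A} + Q) + \alpha c F'(c\E{A} + Q)(1 - \E{A} - \E{R}).
\]
Since $F$ is increasing, $F' \geq 0$ and $F' \leq \sup F'$; since $F \geq 0$ and $1 - \E{A} - \E{R} = \E{S} \in [0,1]$ on $\dom{1}$, the right-hand side is bounded above by $-\beta - \gamma + \alpha c \sup F'$. Under the hypothesis $\beta + \gamma > \alpha c \sup F'$, this upper bound is strictly negative, so the divergence has constant (negative) sign throughout $\dom{1}$.

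Bendixson's criterion then rules out periodic orbits: any closed orbit of a $C^1$ planar vector field in a simply connected region would enclose a subregion on which the integral of the divergence vanishes, which contradicts the divergence having constant nonzero sign. The only mild obstacle is ensuring that the bound on the divergence is uniform on $\dom{1}$, which reduces to the monotonicity of $F$ and the explicit description of the domain; beyond this, the invariance of $\dom{1}$ already established guarantees that any hypothetical cycle is entirely contained in the region where Bendixson applies, completing the argument.
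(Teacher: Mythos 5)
Your proof is correct in substance and follows essentially the same route as the paper: reduce to the planar \((\E{A},\E{R})\) system via \(\E{S} = 1 - \E{A} - \E{R}\), compute the same divergence, bound it using \(F \geq 0\) and the hypothesis, and apply Bendixson's criterion on the simply connected domain \(\dom{1}\). The one slip is your blanket inequality \(\operatorname{div} \leq -\beta - \gamma + \alpha c \sup F'\): it rests on the estimate \(\alpha c\, F'(\E{B})\, \E{S} \leq \alpha c \sup F'\), which is obtained by multiplying \(F'(\E{B})\,\E{S} \leq \sup F'\) by \(\alpha c\), and this is valid only when \(c \geq 0\). For \(c < 0\) the inequality reverses---for instance, at a point with \(\E{S} = 0\) the left-hand side is \(0\) while \(\alpha c \sup F'\) is negative. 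The conclusion nevertheless survives, because for \(c \leq 0\) the term \(\alpha c\, F'(\E{B})\, \E{S}\) is nonpositive outright, so the divergence is at most \(-\beta - \gamma < 0\); this is exactly why the paper splits the argument into the cases \(c \leq 0\) and \(c > 0\) (noting, as in its remark, that the hypothesis is automatic in the former case). Inserting that short case distinction makes your argument complete; everything else, including the divergence computation and the Green's-theorem justification of Bendixson, matches the paper's proof.
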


\begin{remark}
Since the transition rates \(\alpha\), \(\beta\) and \(\gamma\) are positive and \(F'\) is nonnegative, the hypothesis of Proposition~\ref{prop.nocyclesforinhibitors} is always satisfied for a single inhibitory population, since in that case \(c < 0\).
\end{remark}

\begin{proof}
Let \(f \colon \dom{1} \to \mathbb{R}^2\) denote the vector field corresponding to the mean-field system~\eqref{eq.meanfield}, so that
\begin{align*}
    \dE{A} & = f_1(\E{A},\E{R}), \\
    \dE{R} & = f_2(\E{A},\E{R}).
\end{align*}
Direct calculations show that
\[
\partial_{\E{A}}f_1(\E{A},\E{R}) = - \beta - \alpha F(\E{B}) + \alpha F'(\E{B}) c \E{S}
\]
and that
\[
\partial_{\E{R}}f_2(\E{A},\E{R}) = - \gamma.
\]
Since \(F\) takes its values between \(0\) and \(1\), it follows that
\beq{eq.divergence}
\nabla \cdot f(\E{A},\E{R}) \leq - \beta - \gamma + \alpha F'(\E{B})c \E{S}.
\eeq
Now, recall that \(\alpha, \beta, \gamma > 0\), that \(\E{S} \geq 0\) since we assume that the state \((\E{A},\E{R}) \in \dom{1}\), and that \(F\) is increasing. Therefore, the divergence~\eqref{eq.divergence} is always negative when \(c \leq 0\). On the other hand, when \(c > 0\), 
\[
\nabla \cdot f(\E{A},\E{R}) \leq - \beta - \gamma + \alpha c \sup F',
\]
which is always negative provided that \(\beta + \gamma > \alpha c \sup F'\). As stated in the remark above, this condition includes the case where \(c \leq 0\). Thus, if \(\beta + \gamma > \alpha c \sup F'\), then the divergence of the vector field is always negative in \(\dom{1}\), and the criterion of \cite{bendixson_sur_1901} guarantees that there are no cycles in \(\dom{1}\).
\end{proof}

In the same way, the Wilson--Cowan system cannot predict chaotic behavior with two populations, since it has only two dimensions while existence of chaotic solutions requires at least three dimensions in continuous dynamical systems. However, there is no dimensional argument to rule out this possibility with our mean-field system, since it has four dimensions for two populations. We will indeed see in section~\ref{sec.ex.chaos} an example of a pair of two excitatory populations that exhibit chaotic behavior.

To further compare the mean-field system and its Wilson--Cowan subsystem, we can add an extra parameter \(\varepsilon\) to the mean-field system~\eqref{eq.meanfield}, and consider the dynamical system
\begin{subequations}
\label{eq.mixed}
\begin{align}
    \dE{A}[J] & = - \beta_J \E{A}[J] + \alpha_J F_J(\E{B}[J]) \E{S}[J], \\
    \varepsilon\dE{R}[J] & = - \gamma_J \E{R}[J] + \beta_J \E{A}[J].
\end{align}
\end{subequations}
The parameter \(\varepsilon\) can then be used to study the transition between the models. First, the system~\eqref{eq.meanfield} corresponds to the case \(\varepsilon = 1\). Then, in the regime where \(0 < \varepsilon \ll 1\), \eqref{eq.mixed} is a slow-fast system with two time scales, where the active fractions of populations are the slow variables whereas the refractory fractions are the fast variables. Ultimately, in the limit where \(\varepsilon\) goes to zero, the fast components can be considered to be at equilibrium, so that each refractory fraction is forced to \(\E{R}[J] = \frac{\beta_J}{\gamma_J} \E{A}[J]\), and we retrieve the Wilson--Cowan subsystem~\eqref{eq.wilsoncowan}. This suggests that the reduction of the mean-field system to the Wilson--Cowan subsystem is valid when the refractory fractions of populations vary much faster than their active fractions, which is the case when the firing rates of populations are small compared to the rates \(\beta_J\) and \(\gamma_J\). However, this approximation is no longer valid for regimes of larger firing rates where the activation can occur on a time scale comparable to the transitions in and out of the refractory state. It follows that our model can be seen as an extension of the Wilson--Cowan model that is still valid when the firing rate cannot be taken as tending to 0.

\section{Examples}
\label{sec.examples}

In this section, we present three examples where the dynamical behavior of the mean-field system~\eqref{eq.meanfield} is different than that of the Wilson--Cowan subsystem~\eqref{eq.wilsoncowan}, where refractory fractions of populations are fixed to their equilibrium solutions. In all cases, we also show a sample trajectory of the Markov chain described in section~\ref{sec.markovchain}. These examples show that there are cases in which the refractory fractions of populations are needed to get an accurate picture of the average behavior of the dynamics on the network. 

For all examples presented here, we will assume that the neurons' activation rate is sigmoidal, with
\beq{eq.defF}
F_J(y) \defeq \frac{1}{1 + \exp\Bigl( - \dfrac{y - \theta_J}{s_{\theta_J}} \Bigr)},
\eeq
where \(\theta_J \in \mathbb{R}\) is a threshold parameter and \(s_{\theta_J} > 0\) is a scaling parameter.

\subsection{A single excitatory population}
\label{sec.ex.cycle1}

Consider a network of \(N\) neurons with a single population, with parameters
\beq{eq.ex.cycle1.params}
\begin{aligned}
    \alpha & = 12.5\>[\gamma], &
    \theta & = 2, \\
    \beta & = 3\>[\gamma], &
    s_\theta & = 0.4, \\
    \gamma & = 1\>[\gamma], &
    Q & = 0, \\
    c & = 8, \\
\end{aligned}
\eeq
where we have dropped subscripts that would refer to the unique population, and where transition rates \(\alpha\), \(\beta\) and \(\gamma\) are measured in units of \(\gamma\). Indeed, each term in the equations of the mean-field system~\eqref{eq.meanfield} is proportional to one of these rates, so setting \(\gamma = 1\) is equivalent to measuring time in units of \(\nicefrac{1}{\gamma}\). We fix an initial state
\beq{eq.ex.cycle1.initstate}
\bigl( \E{A}, \E{R} \bigr)(0) = (0.1, 0.3).
\eeq
Notice that \(\E{R}(0) = \frac{\beta}{\gamma} \E{A}(0)\), so that \(\dE{R}(0) = 0\) and the initial state belongs to the domain \(\dom[w]{1}\).

\begin{figure*}
\includegraphics[width=\linewidth]{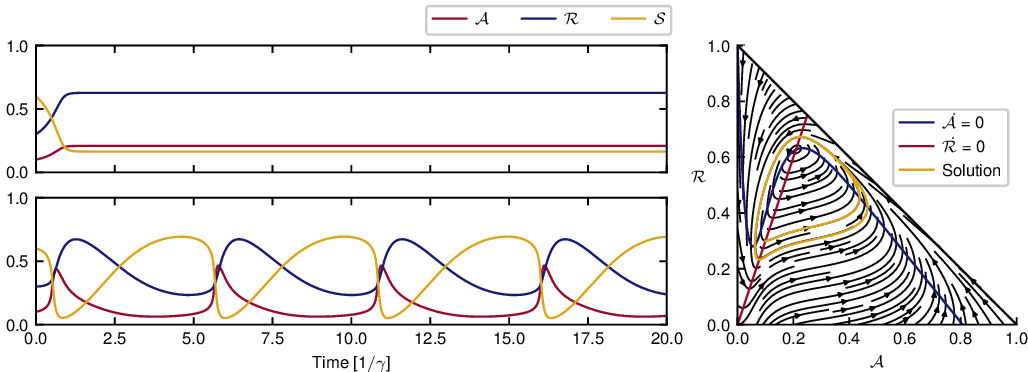}
\caption{Solutions of the dynamical systems~\eqref{eq.meanfield} (bottom left) and~\eqref{eq.wilsoncowan} (top left) with parameters~\eqref{eq.ex.cycle1.params} from the initial state~\eqref{eq.ex.cycle1.initstate}, and phase plane of the system~\eqref{eq.meanfield} (right) with the same parameters. The solution illustrated on the phase plane is the same solution as that on the bottom left panel}
\label{fig.ex.cycle1.solutions}
\end{figure*}

The mean-field system~\eqref{eq.meanfield} and its subsystem~\eqref{eq.wilsoncowan} can be integrated numerically from the initial state~\eqref{eq.ex.cycle1.initstate} with the parameters~\eqref{eq.ex.cycle1.params}. This yields the solutions illustrated in Fig.~\ref{fig.ex.cycle1.solutions}. According to Wilson--Cowan's model, the network's state converges to a stable fixed point. However, according to our mean-field model where the refractory state is explicitly included, the network's state rather converges to a limit cycle. We remark that this cycle is rather robust with respect to the values of the transition rates, which is discussed in detail in appendix~\ref{sec.apx.alphabeta}.

The discrepancy between the long-term behaviors of the mean-field system~\eqref{eq.meanfield} and the Wilson--Cowan system~\eqref{eq.wilsoncowan} shown in Fig.~\ref{fig.ex.cycle1.solutions} suggests that the mixed system~\eqref{eq.mixed} undergoes a supercritical Hopf bifurcation as \(\varepsilon\) varies from \(0\) to \(1\), at the fixed point to which the solution of the Wilson--Cowan system converges. This can be verified numerically by computing the eigenvalues of the Jacobian matrix of the system~\eqref{eq.mixed} with respect to \(\varepsilon\). The results are illustrated in Fig.~\ref{fig.ex.cycle1.eigenvals}. 

\begin{figure}
\includegraphics[width=\linewidth]{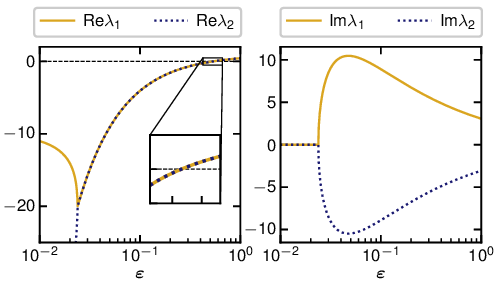}
\caption{Eigenvalues of the Jacobian matrix of the mixed system~\eqref{eq.mixed} with parameters~\eqref{eq.ex.cycle1.params} with respect to \(\varepsilon\), evaluated at the fixed point to which the Wilson--Cowan subsystem converges}
\label{fig.ex.cycle1.eigenvals}
\end{figure}

To obtain a better understanding of the bifurcation, it is instructive to draw a bifurcation diagram. It is possible to obtain a numerical estimate of such a diagram in the \(\varepsilon\)--\(\E{A}\)--\(\E{R}\) space. To do so, we first compute the coordinates of the fixed point simply by computing the zero of the equation~\eqref{eq.wilsoncowan} for the parameters~\eqref{eq.ex.cycle1.params}. Then, for multiple values of \(\varepsilon\), we add a small perturbation to the coordinates of the fixed point, and we integrate numerically the mixed system~\eqref{eq.mixed}. When a stable solution has been reached (either the same fixed point, or a limit cycle around it), we find its coordinates. Plotting these solutions in the \(\varepsilon\)--\(\E{A}\)--\(\E{R}\) space yields the three-dimensional bifurcation diagram illustrated in Fig.~\ref{fig.ex.cycle1.bifurcationdiagram}.

\begin{figure}
\includegraphics[width=\linewidth]{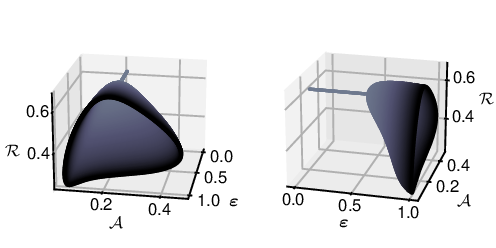}
\caption{Two views of the three-dimensional  bifurcation diagram for the system~\eqref{eq.mixed} with parameters~\eqref{eq.ex.cycle1.params}, where the color of the surface is a function of \(\varepsilon\) to make the surface easier to see}
\label{fig.ex.cycle1.bifurcationdiagram}
\end{figure}

Finally, it is interesting to compare the solutions illustrated in Fig.~\ref{fig.ex.cycle1.solutions} to sample trajectories of the Markov chain that both macroscopic models seek to approximate. To provide a useful comparison, the same parameters~\eqref{eq.ex.cycle1.params} can be used with weights \(W = \nicefrac{c}{N}\) between each pair of neurons, and the initial state can be taken randomly so that a neuron is active at time zero with probability \(0.1\) and refractory with probability \(0.3\). In this way, the microscopic initial state corresponds to~\eqref{eq.ex.cycle1.initstate}. Sample trajectories can be obtained from numerical simulations using the Doob--Gillespie algorithm \citep{gillespie_general_1976}. A typical trajectory obtained with a network of \(N = 2000\) neurons is given in Fig.~\ref{fig.ex.cycle1.trajectory}, where we clearly distinguish oscillations in the network's activity that are analogous to those depicted in Fig.~\ref{fig.ex.cycle1.solutions} (bottom). Therefore, we conclude that in this case, the model~\eqref{eq.meanfield} provides a more accurate prediction of the network's activity than the Wilson--Cowan subsystem~\eqref{eq.wilsoncowan}.

\begin{figure}
\includegraphics[width=\linewidth]{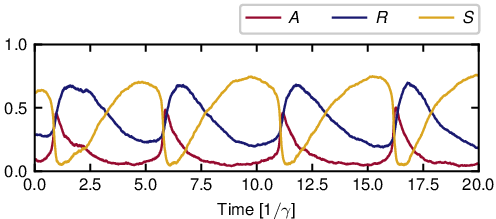}
\caption{Typical trajectory of the Markov chain described in section~\ref{sec.markovchain} with parameters~\eqref{eq.ex.cycle1.params}, from an initial state taken randomly such that the probabilities for a neuron to be active and refractory at time zero are respectively \(0.1\) and \(0.3\).}
\label{fig.ex.cycle1.trajectory}
\end{figure}

\subsection{An excitator--excitator pair}
\label{sec.ex.chaos}

The last example raises an interesting question: if the activities of two excitatory populations can oscillate by themselves, what happens when they are connected together? We give here a possible answer to this question in a case where two such populations are weakly connected to one another.

Consider a network of \(N\) neurons with two excitatory populations, with parameters
\begin{subequations}
\label{eq.ex.chaos.params}
\beq{eq.ex.chaos.params1}
\begin{aligned}
    \alpha_1 & = 12.5\>[\gamma_1], &
    \theta_1 & = 2, \\
    \beta_1 & = 3\>[\gamma_1], &
    s_{\theta_1} & = 0.4, \\
    \gamma_1 & = 1\>[\gamma_1], &
    Q_1 & = 0,
\end{aligned}
\eeq
and
\beq{eq.ex.chaos.params2}
\begin{aligned}
    \alpha_2 & = 3.6\>[\gamma_1], &
    \theta_2 & = 0.84, \\
    \beta_2 & = 8\>[\gamma_1], &
    s_{\theta_2} & = 0.2, \\
    \gamma_2 & = 0.8\>[\gamma_1], &
    Q_2 & = 0,
\end{aligned}
\eeq
where in the same way as in the last example, we measure transition rates in units of \(\gamma_1\) so that time is measured in units of \(\nicefrac{1}{\gamma_1}\). The connections between these populations are described by the matrix
\beq{eq.ex.chaos.c}
    c = \begin{pmatrix}
        8 & 0.6 \\
        0.01 & 14
    \end{pmatrix}.
\eeq
\end{subequations}
We fix an initial state
\beq{eq.ex.chaos.initstate}
(\E{A}[1], \E{A}[2], \E{R}[1], \E{R}[2])(0) = (0.1, 0.02, 0.3, 0.2).
\eeq
Notice that for both populations, \(\E{R}[J](0) = \frac{\beta_J}{\gamma_J} \E{A}[J](0)\), so the initial state belongs to the domain \(\dom[w]{2}\).

Integrating numerically the mean-field system~\eqref{eq.meanfield} and its Wilson--Cowan subsystem from the initial state~\eqref{eq.ex.chaos.initstate} with the parameters~\eqref{eq.ex.chaos.params} yields the solutions illustrated in Fig.~\ref{fig.ex.chaos.solutions}. According to Wilson--Cowan's model, the network's state simply converges to a stable fixed point. However, our mean-field model predicts that the network's activity will exhibit aperiodic behavior, seemingly chaotic. 

\begin{figure}
\includegraphics[width=\linewidth]{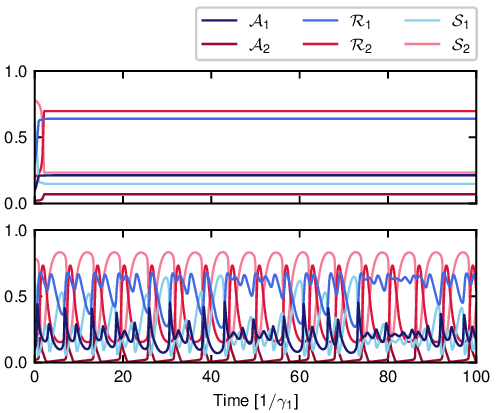}
\caption{Solutions of the dynamical systems~\eqref{eq.meanfield} (bottom) and~\eqref{eq.wilsoncowan} (top) with parameters~\eqref{eq.ex.chaos.params} from the initial state~\eqref{eq.ex.chaos.initstate}}
\label{fig.ex.chaos.solutions}
\end{figure}

To understand the behavior of the network's state, it is instructive to illustrate the solutions of the mean-field system in other ways. In Fig.~\ref{fig.ex.chaos.longsolutions}, the solution of the mean-field system over increasing time intervals is illustrated in the \(\E{A}[1]\)--\(\E{R}[1]\) subspace. The projection of the solution onto this subspace appears not to converge to a point nor to a closed curve, and seems rather to be dense in a bounded subset of the plane. The solution can also be projected onto three-dimensional subspaces. A projection onto the \(\E{A}[1]\)--\(\E{A}[2]\)--\(\E{R}[2]\) subspace is shown in Fig.~\ref{fig.ex.chaos.attractor}. The solution then seems to converge to a bounded subset of lower dimension, which suggests the presence of a strange attractor. 

\begin{figure}
\includegraphics[width=\linewidth]{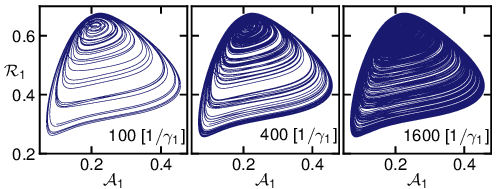}
\caption{Projections of solutions of the dynamical system~\eqref{eq.meanfield} with parameters~\eqref{eq.ex.chaos.params} from the initial state~\eqref{eq.ex.chaos.initstate} over three increasing time intervals}
\label{fig.ex.chaos.longsolutions}
\end{figure}

\begin{figure}
\includegraphics[width=\linewidth]{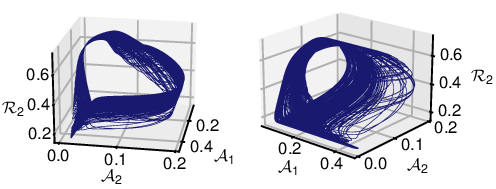}
\caption{Two views of a projection of the solution of the dynamical system~\eqref{eq.meanfield} with parameters~\eqref{eq.ex.chaos.params} from the initial state~\eqref{eq.ex.chaos.initstate} over 1600 time units}
\label{fig.ex.chaos.attractor}
\end{figure}

It is then interesting to estimate a fractal dimension for this attractor. Using the method of \cite{grassberger_measuring_1983}, we estimate its correlation dimension from a solution integrated over 500 000 time units. To do so, we cut the first 1000 time units from the solution to make sure to keep only points on the attractor. Then, we fix a small radius \(r\), and for a point \(x\) in the remaining points, we count the number \(N_x\) of other points lying in a ball of radius \(r\) around \(x\). We do so for 100 such points \(x\), and we average the resulting counts \(N_x\) to find a correlation \(C(r)\). Applying this recipe for many different values of \(r\) between \(10^{-4}\) and \(10^{-2}\), we obtain a power relation of the form
\beq{eq.ex.chaos.cordim}
C(r) \sim r^\nu,
\eeq
where \(\nu\) is the correlation dimension of the attractor. The result is illustrated in Fig.~\ref{fig.ex.chaos.cordim}. To find a value for \(\nu\), we perform a linear regression for \(\log C\) as a function of \(\log r\). The slope of the fitted line is the correlation dimension. We obtain 
\beq{}
\nu = 2.173 \pm 0.005,
\eeq
the error being the standard error from the linear regression. 

\begin{figure}
\includegraphics[width=\linewidth]{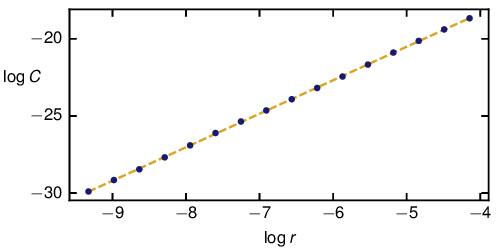}
\caption{Logarithm of the correlation \(C\) as a function of the logarithm of the radius \(r\), to estimate the correlation dimension of the attractor}
\label{fig.ex.chaos.cordim}
\end{figure}

To confirm the chaotic behavior of the solution, we compute the largest Lyapunov exponent of the solution, which is a standard method used to either detect or define chaos \citep{kinsner_characterizing_2006, hunt_defining_2015}. We use the discrete QR method as described by \cite{dieci_computation_1997}. We do so for four different combinations of integration interval and time steps. The results are given in Table~\ref{tab.ex.chaos.lyapunov}. The largest Lyapunov exponent of the system is positive, indicating a chaotic behavior.

\begin{table}[t]
\centering
\begin{tabular}{*{3}{>{\centering\arraybackslash}p{14mm}}}
    Integration interval & Time step & Lyapunov exponent \\
    \([\nicefrac{1}{\gamma_1}]\) & \([\nicefrac{1}{\gamma_1}]\) & \\
    \hline
    1000  & 0.01  & 0.1592 \\
    10000 & 0.01  & 0.1572 \\
    1000  & 0.001 & 0.1691 \\
    10000 & 0.001 & 0.1633
\end{tabular}
\caption{Values of the largest Lyapunov exponent for numerical solutions of the dynamical system~\eqref{eq.meanfield} with parameters~\eqref{eq.ex.chaos.params} and initial state~\eqref{eq.ex.chaos.initstate}, for four different combinations of integration interval and time step}
\label{tab.ex.chaos.lyapunov}
\end{table}

Finally, we compare the solutions illustrated in Fig.~\ref{fig.ex.chaos.solutions} to sample trajectories of the Markov chain to determine how well the macroscopic models approximate the behavior of the network. To do so, we use the parameters~\eqref{eq.ex.chaos.params} with weights \(W_{JK} = \nicefrac{c_{JK}}{\abs{K}}\) from the neurons of population \(K\) to neurons of population \(J\). We choose randomly an initial state to each neuron of population \(J\) so that it is active with probability \(\E{A}[J](0)\) and refractory with probability \(\E{R}[J](0)\), where the macroscopic initial values are taken from the initial state~\eqref{eq.ex.chaos.initstate}. As is in the first example, sample trajectories are obtained using the Doob--Gillespie algorithm. A typical result for a network of \(N = 2000\) neurons, with \(1000\) neurons in each population, is given in Fig.~\ref{fig.ex.chaos.trajectory}. As seen on this figure, the evolution of the network's state does exhibit the aperiodic behavior predicted by our mean-field model. Therefore, we conclude that in this case, including the refractory state explicitly in the dynamical system leads to a more accurate prediction of the network's activity than to force it to its equilibrium solution.

\begin{figure}
\includegraphics[width=\linewidth]{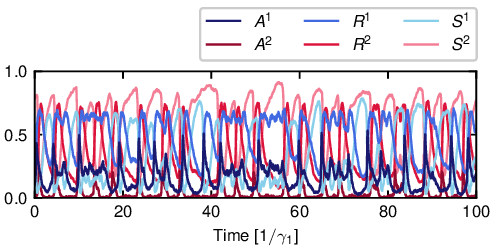}
\caption{Typical trajectory of the Markov chain described in section~\ref{sec.markovchain} with parameters~\eqref{eq.ex.chaos.params}, from an initial state taken randomly so that the probabilities for a neuron to be active or refractory correspond to the macroscopic initial state~\eqref{eq.ex.chaos.initstate}}
\label{fig.ex.chaos.trajectory}
\end{figure}

\subsection{An excitator--inhibitor pair}

In this last example, we show that the benefit of including the refractory state explicitly in the dynamical system is not only to allow for new dynamical behaviors, but also to predict more accurately the dynamical behavior of the underlying Markov chain in other situations.

Consider a network of \(N\) neurons with two populations: one excitatory labelled \(E\), with parameters
\begin{subequations}
\label{eq.ex.cycle2.params}
\beq{eq.ex.cycle2.paramsE}
\begin{aligned}
    \alpha_E & = 10\>[\gamma_I], &
    \theta_E & = 0, \\
    \beta_E & = 0.8\>[\gamma_I], &
    s_{\theta_E} & = 0.4, \\
    \gamma_E & = 4\>[\gamma_I], &
    Q_E & = 0,
\end{aligned}
\eeq
and one inhibitory labelled \(I\), with parameters
\beq{eq.ex.cycle2.paramsI}
\begin{aligned}
    \alpha_I & = 9\>[\gamma_I], &
    \theta_I & = 3, \\
    \beta_I & = 1\>[\gamma_I], &
    s_{\theta_I} & = 0.4, \\
    \gamma_I & = 1\>[\gamma_I], &
    Q_I & = 0,
\end{aligned}
\eeq
where we measure transition rates according to \(\gamma_I\) so that time is measured in units of \(\nicefrac{1}{\gamma_I}\). The connections between these populations are described by the matrix
\beq{eq.ex.cycle2.c}
    c = \begin{pmatrix}
        c_{EE} & c_{EI} \\
        c_{IE} & c_{II}
    \end{pmatrix} = \begin{pmatrix}
        8 & -12 \\
        9 & -2
    \end{pmatrix}.
\eeq
\end{subequations}
We fix an initial state
\beq{eq.ex.cycle2.initstate}
\bigl( \E{A}[E], \E{A}[I], \E{R}[E], \E{R}[I] \bigr)(0) = (0.4, 0.4, 0.08, 0.4),
\eeq
which belongs to the domain \(\dom[w]{2}\).

Integrating numerically the mean-field system~\eqref{eq.meanfield} and its subsystem~\eqref{eq.wilsoncowan} with the parameters~\eqref{eq.ex.cycle2.params} from the initial state~\eqref{eq.ex.cycle2.initstate} yields the solutions presented in Fig.~\ref{fig.ex.cycle2.solutions}. According to Wilson--Cowan's model, the network's state converges to a stable fixed point, but according to our mean-field model, it rather converges to a limit cycle. However, we remark that Wilson--Cowan's model can also predict oscillations with parameters close to those chosen here: for example, if the connection matrix is changed to \(c = \bigl(\begin{smallmatrix} 9 & -12 \\ 9 & -1 \end{smallmatrix}\bigr)\), then both models predict oscillations. Hence, expanding Wilson--Cowan's model to our mean-field model modifies the values of parameters for which oscillations are predicted. 

\begin{figure}
\includegraphics[width=\linewidth]{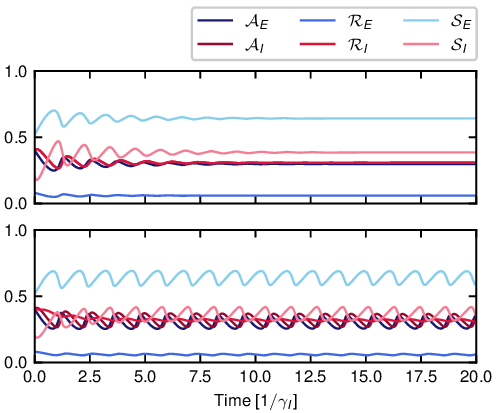}
\caption{Solutions of the dynamical systems~\eqref{eq.meanfield} (bottom) and~\eqref{eq.wilsoncowan} (top) with parameters~\eqref{eq.ex.cycle2.params} from the initial state~\eqref{eq.ex.cycle2.initstate}}
\label{fig.ex.cycle2.solutions}
\end{figure}

As in the first example, the difference between the long-term behaviors of the mean-field system and its Wilson--Cowan subsystem suggests a bifurcation with respect to \(\varepsilon\) in the mixed system~\eqref{eq.mixed}. This is verified numerically by computing the eigenvalues of the Jacobian matrix of the mixed system with respect to \(\varepsilon\). Doing so indeed shows that as \(\varepsilon\) goes from 0 to 1, the real part of a pair of conjugate eigenvalues goes from negative to positive. Hence, we see that the mixed system undergoes a Hopf bifurcation in this interval.

It is possible to further understand this bifurcation by drawing a bifurcation diagram. It is unfortunately not possible to draw a complete bifurcation diagram due to the dimension of the system, but it is still possible to obtain a diagram for each individual state component. The result is illustrated in Fig.~\ref{fig.ex.cycle2.bifurcationdiagram}, where the maximum and minimum values of each state component on the cycle or on the fixed point is plotted against \(\varepsilon\). According to the results, the Hopf bifurcation appears to be supercritical.

\begin{figure}
\includegraphics[width=\linewidth]{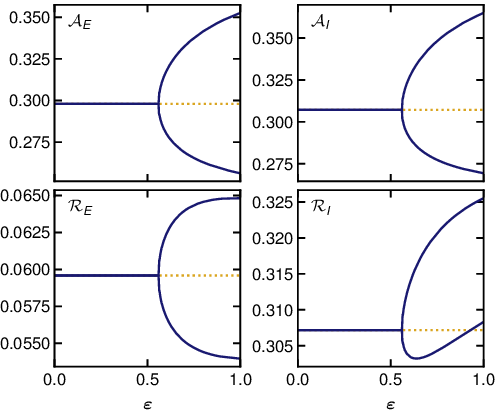}
\caption{Bifurcation diagrams of the mixed system~\eqref{eq.mixed} with parameters~\eqref{eq.ex.cycle2.params}, where the dotted yellow lines represent the coordinates of the fixed point and the blue lines represent either the maximum and minimum values of the component on the cycle, or the value of the component at the fixed point}
\label{fig.ex.cycle2.bifurcationdiagram}
\end{figure}

Finally, the solutions of the dynamical systems can be compared to trajectories of the Markov chain whose macroscopic behavior is approximated by both models. To do so, the same parameters~\eqref{eq.ex.cycle2.params} can be used with weights \(W_{JK} = \nicefrac{c_{JK}}{\abs{K}}\) from neurons of population \(K\) to neurons of population \(J\). Then, the initial state of each neuron of population \(J\) is taken randomly so that it is active with probability \(\E{A}[J](0)\) and refractory with probability \(\E{R}[J](0)\), where macroscopic initial values are those given in the initial state~\eqref{eq.ex.cycle2.initstate}. As in the other examples, sample trajectories are obtained using the Doob--Gillespie algorithm. A typical trajectory with a network of \(N = 2000\) neurons with \(1000\) neurons in each population is given in Fig.~\ref{fig.ex.cycle2.trajectory}. This trajectory exhibits distinct oscillations in the network's activity. Thus, it is clear that in this case, the mean-field model provides a more accurate approximation of the network's behavior than its Wilson--Cowan subsystem.

\begin{figure}
\includegraphics[width=\linewidth]{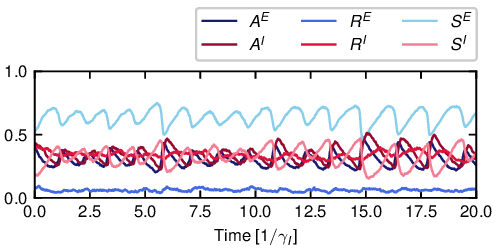}
\caption{Typical trajectory of the Markov chain described in section~\ref{sec.markovchain} with parameters~\eqref{eq.ex.cycle2.params}, from an initial state taken randomly such that the probabilities for a neuron to be active or refractory at time zero correspond to the macroscopic initial state~\eqref{eq.ex.cycle2.initstate}}
\label{fig.ex.cycle2.trajectory}
\end{figure}

\section{Conclusion}

The Wilson--Cowan model has played an important role in the description of neural systems at the macroscopic level. It has been shown that when considering several excitatory and inhibitory populations, the Wilson--Cowan model can exhibit rich dynamics such as oscillations, bistability and chaos. Furthermore, it is a useful tool in better understanding biological neural networks, especially sensory systems.

One of the assumptions on which the Wilson--Cowan model relies is that the ratio of the numbers of active and refractory neurons is constant in a single population. In this work, we showed that lifting this assumption can reveal novel dynamics in the model, such as oscillations in the activity of a single population or chaotic behavior in the activity of two populations.

An interesting byproduct of our method is that when constructing the model as done in section~\ref{sec.macromodel}, the way in which the dynamics might be affected by correlations between the activities of different populations becomes quite clear. Indeed, to obtain a closed dynamical system, we chose to neglect covariances in equations~\eqref{eq.macrosystem}, which led us to approximate the expectations \(\expect{F_J(B_J^t)S_t^J}\) by the corresponding functions where the variables \(S_t^J\) and \(B_t^J\) are replaced with their expectations. Using a higher-order moment closure, we could take into account the correlations between activities of different populations. We intend to investigate this in future work.

\backmatter
\bmhead{Acknowledgments}

This work was supported by the Natural Sciences and Engineering Research Council of Canada (N.D, P.D, V.P), the Fonds de recherche du Qu\'ebec -- Nature and technologies (N.D., V.P.), and the Sentinel North program of Universit\'e Laval (N.D., P.D), funded by the Canada First Research Excellence Fund.

\bmhead{Code availability}

All numerical results presented in this paper were obtained with the PopNet package \citep{painchaud_popnet_2022}, written in the Python programming language and available on GitHub.

\bibliography{references}

\appendix

\section{Alternative construction of the model}
\label{sec.apx.construction}

In this section, we provide an alternative construction of the model presented in section~\ref{sec.model}, in which we allow parameters to be random, independent and identically distributed over populations. As long as we keep the mean-field assumption, the resulting macroscopic is still the dynamical system~\eqref{eq.meanfield}. This construction is presented in greater detail and for a more general Markov chain in chapter~2 of \citep{painchaud_dynamique_2021}; the special case of the Markov chain presented here is discussed in section~4.2 of the latter reference. For basic definitions and results about stochastic processes and Markov chains, we refer to the books by \cite{doob_stochastic_1990} and \cite{norris_markov_1997}. 

We start from a similar setting as that described at the beginning of section~\ref{sec.model}. We consider a network of \(N\) neurons labelled from integers from 1 to \(N\), and split into a collection \(\mathscr{P}\) of populations, which are assumed to be made of a large number of neurons. The weight matrix \(W\) is now random, defined on a probability space \((H, \mathscr{H}, \mu)\), where \(\mathscr{H}\) and \(\mu\) respectively are a \(\sigma\)-algebra and a probability measure on the set \(H\). We still interpret links between neurons as being either excitatory or inhibitory according to the sign of the corresponding entry of \(W\). Then, for each neuron \(j\), we consider real-valued random variables \(\alpha_j\), \(\beta_j\), \(\gamma_j\) and \(\theta_j\), defined on \((H,\mathscr{H},\mu)\) as well, the first three being nonnegative. We assume that all parameters \(\alpha_j\), \(\beta_j\), \(\gamma_j\), \(\theta_j\) and \(W_{jk}\) are independent random variables, and that they are identically distributed over populations in \(\mathscr{P}\).

We still see the parameters \(\beta_j\) and \(\gamma_j\) as being the transition rates of neuron \(j\) as in Fig.~\ref{fig.rates}. For the reduction to the macroscopic model to be possible, we will use a different activation rate:
\[
a_j(\eta, x) \defeq \alpha_j \charf{T_j(x)}(\eta)
\]
where \(\charf{T_j(x)}\) is the indicator function of the set
\[
T_j(x) \defeq \Bigl\{ \eta \in H : \sum_{k=1}^N W_{jk}(\eta) \Re x_k + Q_J > \theta_j(\eta) \Bigr\},
\]
where the external input \(Q_J\) is still assumed to be constant over population \(J\). 

Now, any \(\eta \in H\) fixes a choice of parameters, allowing to construct a Markov chain as in section~\ref{sec.markovchain}. To do so, we define the matrix \(M^\eta \defeq \{m^\eta(x,y) : x,y \in E\}\) with entries
\[
m^\eta(x,y) \defeq \sum_{j=1}^N m_j^\eta(x,y) \prod_{\substack{k=1 \\k\neq j}}^N \delta_{x_ky_k}
\]
where
\[
\begin{aligned}
m_j^\eta(x,y) \defeq a_j(\eta, x) & (1 - \abs{x_j}) \bigl( \Re y_j - (1 - \abs{y_j}) \bigr) \\
    {} + \beta_j(\eta) & \Re(x_j) (\Im y_j - \Re y_j) \\
    {} + \gamma_j(\eta) & \Im(x_j) \bigl( (1 - \abs{y_j}) - \Im y_j \bigr).
\end{aligned}
\]
Again, \(M^\eta\) is the generator of a continuous-time Markov chain. Hence, it follows from the Kolmogorov extension theorem that a probability measure \(\mathbb{P}^\eta\) exists on \((\Omega, \mathscr{F}) \defeq (E,2^E)^{[0,\infty)}\) such that for any \(x,y \in E\), as \(\Delta t \downarrow 0\),
\[
\parambprob{X_{t+\Delta t}=y \given X_t=x} = \delta_{xy} + m^\eta(x,y) \Delta t + o(\Delta t),
\]
where \(\{X_t\}_{t\geq 0}\) is the coordinate mapping process on \((\Omega,\mathscr{F})\), defined as \(X_t(\omega) \defeq \omega(t)\). Similar relations to \eqref{eq.transitionprob} with a dependence on \(\eta\) also hold. The measures \(\mathbb{P}^\eta\) can be combined in a single measure \(\mathbb{P}\) on the product space \((H \times \Omega, \mathscr{H} \otimes \mathscr{F})\) such that
\beq{eq.probmeasure}
\expect{Z} = \int_H \int_\Omega Z(\eta,\omega) \d\mathbb{P}^\eta(\omega) \d\mu(\eta)
\eeq
for any random variable \(Z\) on \((H \times \Omega, \mathscr{H} \otimes \mathscr{F})\), where \(\mathbb{E}\) is the expectation with respect to \(\mathbb{P}\). 

To study the average behavior of the Markov chain, we start by defining
\begin{align*}
    p_j^\eta(t) & \defeq \bprob{X_t^j = 1 \given \eta}, \\
    q_j^\eta(t) & \defeq \bprob{X_t^j = 0 \given \eta}, \\
    r_j^\eta(t) & \defeq \bprob{X_t^j = i \given \eta},
\end{align*}
where we condition on the projection onto \(H\). Then, in the same way as in section~\ref{sec.dynsys}, it can be shown that
\begin{align*}
    \dot{p}_j^\eta(t) & = - \beta_j(\eta) p_j^\eta(t) + \expect[\big]{\bigl( 1 - \abs{X_t^j} \bigr) a_j(\eta, X_t) \given \eta}, \\
    \dot{q}_j^\eta(t) & = - \expect[\big]{\bigl( 1 - \abs{X_t^j} \bigr) a_j(\eta, X_t) \given \eta} + \gamma_j(\eta) r_j^\eta(t), \\
    \dot{r}_j^\eta(t) & = - \gamma_j(\eta) r_j^\eta(t) + \beta_j(\eta) p_j^\eta(t).
\end{align*}
Now, we can define
\begin{align*}
    p_j(t) & \defeq \bprob{X_t^j = 1}, \\
    q_j(t) & \defeq \bprob{X_t^j = 0}, \\
    r_j(t) & \defeq \bprob{X_t^j = i}.
\end{align*}
Then \(\dot{p}_j(t) = \frac{\d}{\d t} \int_H p_j^\eta(t) \d\mu(\eta)\). The functions \(\eta \mapsto p_j^\eta(t)\) can be dominated by the function \(\sum_{x\in E} \abs{\beta_j + \alpha_j \charf{T_j(x)}}\), which is integrable over \(H\), so that the derivative can be passed under the integral sign. Similar bounds hold for \(q_j^\eta(t)\) and \(r_j^\eta(t)\), and it follows that
\begin{align*}
    \dot{p}_j(t) & = - \expect{\beta_j \Re X_t^j} + \expect[\big]{\bigl( 1 - \abs{X_t^j} \bigr) a_j(\cdot, X_t)}, \\
    \dot{q}_j(t) & = - \expect[\big]{\bigl( 1 - \abs{X_t^j} \bigr) a_j(\cdot, X_t)} + \expect{\gamma_j \Im X_t^j}, \\
    \dot{r}_j(t) & = - \expect{\gamma_j \Im X_t^j} + \expect{\beta_j \Re X_t^j}.
\end{align*}

To pass to the macroscopic point of view, we consider for \(J \in \mathscr{P}\) the fractions of population \(A_t^J\), \(R_t^J\) and \(S_t^J\) defined as in section~\ref{sec.macromodel}, that is,
\begin{align*}
A_t^J & \defeq \frac{1}{\abs{J}} \sum_{j\in J} \Re X_t^j, \\
R_t^J & \defeq \frac{1}{\abs{J}} \sum_{j\in J} \Im X_t^j, \\
S_t^J & \defeq \frac{1}{\abs{J}} \sum_{j\in J} \bigl( 1 - \abs{X_t^j} \bigr).
\end{align*}
Now that parameters are random, we cannot simply sum over populations to obtain the differential equations \eqref{eq.macrosystem} from the expressions we obtained for the derivatives of \(p_j\), \(q_j\) and \(r_j\). However, we can still argue that the same macroscopic dynamical system describes the average behavior of the network's activity, at least approximatively. To do so, we define the subpopulations
\[
J_t^\xi \defeq \{j \in J : X_t^j = \xi\}
\]
for \(\xi \in \{0,1,i\}\). The size of these subpopulations can easily be related to the macroscopic state variables. Indeed, since \(\Re X_t^j = 1\) if and only if \(X_t^j = 1\), then \(\abs{J_t^1} = \abs{J} A_t^J\). Similarly, \(\abs{J_t^0} = \abs{J} S_t^J\) and \(\abs{J_t^i} = \abs{J} R_t^J\). Now, notice that
\[
\frac{1}{\abs{J}} \sum_{j\in J} \beta_j \Re X_t^j
    = \frac{1}{\abs{J}} \sum_{j\in J_t^1} \beta_j
    = \frac{A_t^J}{\abs{J_t^1}} \sum_{j\in J_t^1} \beta_j.
\]
Assuming that the random variables \(\beta_j\) are independent and identically distributed over populations, if the size of \(J\) is very large, the law of large numbers leads to expect the average \(\frac{1}{\abs{J}} \sum_{j\in J} \beta_j\) to be very close to the expectation \(\beta_J \defeq \expect{\beta_j}\). In the same way, since the value of \(\beta_j\) should not influence the probability that \(j\) is active at time \(t\), we expect the average \(\frac{1}{\abs{J_t^1}} \sum_{j\in J_t^1} \beta_j\) to be very close to \(\beta_J\). This leads to approximate
\beq{eq.approxbeta}
\frac{1}{\abs{J}} \sum_{j\in J} \beta_j \Re X_t^j
    \approx \beta_J A_t^J,
\eeq
for a large population \(J\). In the same way, we approximate
\beq{eq.approxgamma}
\frac{1}{\abs{J}} \sum_{j \in J} \gamma_j \Im X_t^j
    = \frac{R_t^J}{\abs{J_t^i}} \sum_{j \in J_t^i} \gamma_j
    \approx \gamma_J R_t^J
\eeq
where \(\gamma_J \defeq \expect{\gamma_j}\) for \(j \in J\), and
\begin{multline*}
\sum_{k=1}^N W_{jk} \Re X_t^k
    = \sum_{K\in\mathscr{P}} \sum_{k \in K_t^1} W_{jk} \\
    \approx \sum_{K\in\mathscr{P}} \abs{K} A_t^K W_{JK}
    = \sum_{K\in\mathscr{P}} c_{JK} A_t^K
\end{multline*}
where \(c_{JK} \defeq \abs{K} W_{JK}\) and \(W_{JK} \defeq \expect{W_{jk}}\) for \(j \in J\) and \(k \in K\). The last case requires more attention. Using the last approximation, we start by approximating
\begin{align}
\notag
\frac{1}{\abs{J}} \sum_{j\in J} & \bigl( 1 - \abs{X_t^j} \bigr) a_j(\cdot, X_t) \\
    \notag
     & = \frac{S_t^J}{\abs{J_t^0}} \sum_{j\in J_t^0} \alpha_j \charf{\bigl\{ \sum_{k=1}^N W_{jk} \Re X_t^k + Q_J > \theta_j\bigr\}} \\
     \notag
     & \approx \frac{S_t^J}{\abs{J_t^0}} \sum_{j\in J_t^0} \alpha_j \charf{\{B_t^J > \theta_j\}},
\intertext{where \(B_t^J \defeq \sum_{K\in\mathscr{P}} c_{JK} A_t^K + Q_J\). Using the same argument as in the last cases, we approximate by replacing the rate \(\alpha_j \charf{\{B_t^J > \theta_j\}}\) with its expectation given the state of the network. Since \(\alpha_j\) and \(\theta_j\) are independent, we obtain}
\frac{1}{\abs{J}} \sum_{j\in J} & \bigl( 1 - \abs{X_t^j} \bigr) a_j(\cdot, X_t) \approx \alpha_J F_J(B_t^J) S_t^J
\label{eq.approxalpha}
\end{align}
where \(\alpha_J \defeq \expect{\alpha_j}\) for \(j \in J\) and where \(F_J\) denotes the cumulative distribution function of \(\theta_j\) for \(j \in J\), which is the expectation of \(\charf{\{B_t^J > \theta_j\}}\) given the state of the network. We remark that this approximation is based on idea that the set \(J_t^0\) is large, and that the approximate activation rate \(\alpha_j \charf{\{B_t^J > \theta_j\}}\) has no influence on the probability for neuron \(j\) to be sensitive at time \(t\). However, one could expect this to be false, especially if the standard deviation of \(\theta_j\) is high: if the threshold \(\theta_j\) is higher, it is harder for \(j\) to reach its threshold input and to activate. Therefore, we expect this last approximation to be valid only when the standard deviation of \(\theta_j\) is small. 

Now, consider the macroscopic state variables \(\E{A}[J](t) \defeq \expect{A_t^J}\), \(\E{R}[J](t) \defeq \expect{R_t^J}\) and \(\E{S}[J](t) \defeq \expect{S_t^J}\). By linearity of the derivative,
\[
\dE{A}[J] = \frac{1}{\abs{J}} \sum_{j\in J} \dot{p}_j.
\]
Hence, the approximations \eqref{eq.approxbeta} and \eqref{eq.approxalpha} lead to approximate
\begin{align*}
\dE{A}[J](t)
    & \approx \expect[\big]{-\beta_J A_t^J + \alpha_J F_J(B_t^J) S_t^J} \\
    & = - \beta_J \E{A}[J](t) + \alpha_J \expect{F_J(B_t^J) S_t^J}.
\end{align*}
In the same way, the approximations \eqref{eq.approxbeta}, \eqref{eq.approxgamma} and \eqref{eq.approxalpha} lead to
\begin{align*}
\dE{R}[J](t)
    & \approx - \gamma_J \E{R}[J](t) + \beta_J \E{A}[J](t)
\shortintertext{and}
\dE{S}[J](t)
    & \approx - \alpha_J \expect{F_J(B_t^J) S_t^J} + \gamma_J \E{R}[J](t).
\end{align*}
Thus, the dynamical system \eqref{eq.macrosystem} describes approximatively the evolution of the network's activity. In the same way as in section~\ref{sec.macromodel}, the mean-field assumption yields the dynamical system \eqref{eq.meanfield}.

We conclude by noticing that, even though the approximations we made here seem reasonable, it would be far from obvious to make the argument fully rigorous by properly taking a limit as the sizes of the populations grow infinitely large. This is because the probability measure \(\mathbb{P}\) defined in \eqref{eq.probmeasure} depends on the measures \(\mathbb{P}^\eta\), which are defined from the generators \(M^\eta\) and thus depend on the network's size. Hence, to vary the network's size, we would need to vary the probability measure, and it is not clear how these different measures can be related. Moreover, the sets over which averages of parameters are approximated to their expectations are subpopulations of neurons that are in a given state at a given time, which are random. Thus, to use correctly some sort of law of large numbers, we would need to ensure that these subpopulations become arbitrarily large with a high probability.

\section{Further bifurcation analysis of Example~\ref{sec.ex.cycle1}}
\label{sec.apx.alphabeta}

In this section, we investigate the impact of the transition rates on the dynamics of the single-population network of Example~\ref{sec.ex.cycle1}, with the idea to understand how robust is the limit cycle with respect to changes in their values. As before, we measure all three transition rates \(\alpha\), \(\beta\) and \(\gamma\) in units of \(\gamma\), which means that we keep \(\gamma = 1\). Thus, we are really only interested in modifying \(\alpha\) and \(\beta\). We keep the other parameters fixed, with values given by~\eqref{eq.ex.cycle1.params}.

\begin{figure*}
\includegraphics[width=\linewidth]{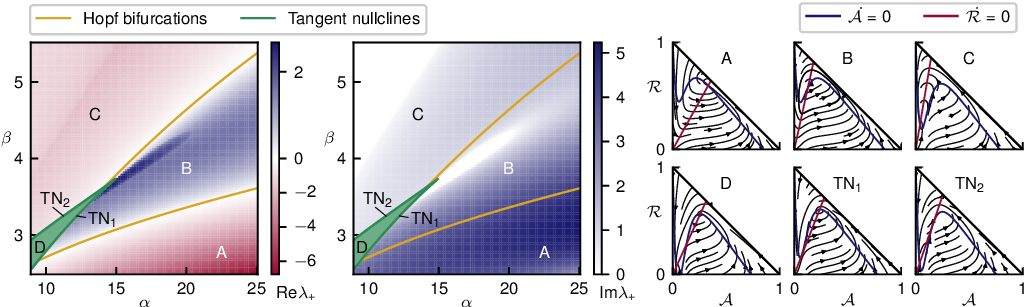}
\caption{Bifurcation diagram on the parameters \(\alpha\) and \(\beta\) for the system~\eqref{eq.meanfield} with parameters~\eqref{eq.ex.cycle1.params}. In regions \textsf{A}, \textsf{B} and \textsf{C} of the left panels, the colors represent the real and imaginary parts of the eigenvalue \(\lambda_+\) of the Jacobian matrix of the system evaluated at the unique fixed point, where \(\lambda_+\) is the eigenvalue with maximal real or imaginary part. The abbreviation \textsf{TN} stands for ``tangent nullclines''. The right panels give sketches of the phase plane in different regions of the bifurcation diagram}
\label{fig.apx.alphabeta}
\end{figure*}

We start by studying the number of fixed points of the system. We do so from the nullclines
\begin{align*}
\E{R}\Bigr\rvert_{\dE{A}=0} & = 1 - \E{A} - \frac{\beta\E{A}}{\alpha F(\E{B})}
\shortintertext{and}
\E{R}\Bigr\rvert_{\dE{R}=0} & = \frac{\beta}{\gamma}\, \E{A}.
\end{align*}
An illustration of these nullclines in the case where \(\alpha\) and \(\beta\) have the values of Example~\ref{sec.ex.cycle1} is given on the phase plane on the right panel of Fig.~\ref{fig.ex.cycle1.solutions}. The fixed points of dynamical system are the points where these nullclines intersect, so they correspond to the zeros of
\[
g(\E{A}) \defeq 1 - \E{A} - \frac{\beta\E{A}}{\alpha F(\E{B})} - \frac{\beta}{\gamma}\E{A}.
\]
Notice that \(g(0) = 1\) while
\[
g\Bigl( \frac{\gamma}{\beta + \gamma} \Bigr) = - \frac{\beta\gamma}{\alpha(\beta + \gamma) F(\frac{c\gamma}{\beta+\gamma} + Q)} < 0.
\]
Thus, the intermediate value theorem shows that the system has at least one fixed point \((\E{A}^*,\E{R}^*)\) with \(\E{A}^* \in [0,\frac{\gamma}{\beta+\gamma}]\). At such a fixed point, the refractory fraction is \(\E{R}^* = \frac{\beta}{\gamma} \E{A}^* = \frac{\beta}{\beta+\gamma}\), and \((\E{A}^*,\E{R}^*) \in \dom{1}\). 

Now, assuming that \(F\) is given by the sigmoidal function~\eqref{eq.defF}, it has the property that
\[
F'(\E{B}) = \frac{1}{s_\theta} F(\E{B}) \bigl(1 - F(\E{B}) \bigr).
\]
Using this result, it is straightforward to compute
\[
g''(\E{A}) = \frac{\beta c}{\alpha s_\theta} \biggl( \frac{1}{F(\E{B})} - 1 \biggr) \biggl( 2 - \frac{c\E{A}}{s_\theta} \biggr).
\]
Since \(F < 1\), it follows that \(g''\) has a single zero. Hence, \(g\) has a single inflection point, and it cannot have more than three zeros.

The above discussion shows that the system has between one and three fixed points. It would be hard to determine precisely the location and the number of fixed points analytically, but it can be done numerically. Indeed, for a given value of \(\alpha\), it is possible to compute the values \(\beta\) for which the nullclines are tangent to each other. These values of \(\beta\) corresponds to the boundary between the domains of the \(\alpha\)--\(\beta\) space where the system admits one and three fixed points. Then, when there is only one fixed point, it is possible to find it numerically, and to compute the eigenvalues of the Jacobian matrix of the system to determine its stability. The result is illustrated in Fig.~\ref{fig.apx.alphabeta}. The regions \textsf{A}, \textsf{B} and \textsf{C} of the bifurcation diagram are those where the system admits a single fixed point, and region \textsf{D} is that where it admits three fixed points. The boundary of region \textsf{D} corresponds to the points where the nullclines are tangent. 

Two important conclusions can be understood from these numerical results. First, in region \textsf{B}, the system admits a single fixed point which is unstable, since at least one eigenvalue of the Jacobian matrix of the system has positive real part. Since the domain \(\dom{1}\) is always invariant, it then follows from the theorem of Poincaré--\cite{bendixson_sur_1901} that a limit cycle exists somewhere in \(\dom{1}\). Moreover, on the curves that bound region \textsf{B}, where the real part of the eigenvalue \(\lambda_+\) is zero, the imaginary part of \(\lambda_+\) is nonzero. Hence, on these curves, the eigenvalues of the Jacobian matrix are conjugates, and their real part changes sign. This implies that the system undergoes Hopf bifurcations on these curves. 

Therefore, if the values of \(\alpha\) and \(\beta\) lie in region \textsf{B} in Fig~\ref{fig.apx.alphabeta}, then a limit cycle exists. We remark however that, a priori, this condition is sufficient but not necessary. Indeed, if the Hopf bifurcations on the bounds of region \textsf{B} are subcritical, the limit cycle may very well continue to exist outside of region \textsf{B}. Further analysis would be needed to determine with more precision the values of \(\alpha\) and \(\beta\) which yield a limit cycle.

\end{document}